\documentclass[11pt]{article}
\usepackage[margin=0.8in]{geometry}
\usepackage[T1]{fontenc}
\usepackage[utf8]{inputenc}
\usepackage{lmodern}
\usepackage{microtype}
\usepackage{amsmath,amssymb,amsthm}
\usepackage{booktabs}
\usepackage{multirow}
\usepackage{graphicx}
\usepackage{xcolor}
\usepackage{tikz}
\usetikzlibrary{arrows.meta,positioning}
\usepackage[numbers,sort&compress]{natbib}
\usepackage[hidelinks]{hyperref}

\newtheorem{theorem}{Theorem}
\newtheorem{corollary}{Corollary}

\newcommand{\ak}[1]{{\color{black}{#1}}}

\newcommand{\rev}[1]{{\color{black}{#1}}}

\newcommand{\fs}{{\mathfrak s}}

\newcommand{\E}{\mathbb E}
\newcommand{\Rm}{\mathbb R}

\title{ML-based approach to classification and generation of structured light propagation in turbulent media}

\author{
Aokun Wang$^{1}$, Anjali Nair$^{2,*}$, Zhongjian Wang$^{1}$, Guillaume Bal$^{2,3}$\\[0.75em]
\normalsize $^{1}$Division of Mathematical Sciences, Nanyang Technological University, 21 Nanyang Link, 637371 Singapore\\
\normalsize $^{2}$Committee on Computational and Applied Mathematics, University of Chicago, Chicago, IL 60637 USA\\
\normalsize $^{3}$Departments of Statistics and Mathematics, University of Chicago, Chicago, IL 60637 USA\\
\normalsize $^{*}$\texttt{anjalinair@uchicago.edu}
}

\date{}

\begin{document}

\maketitle

\begin{abstract}
\rev{We study the classification task of structured-light beams after propagation through a random turbulent medium. 
The received speckle patterns are generated by numerical simulation of a stochastic paraxial propagation model, and the classification task is formulated over a finite alphabet of
$15$ OAM source classes. We benchmark intensity and autocorrelation inputs using SimpleCNN and ResNet-18 as classifiers. 
We also quantify the effect of training-set size and receiver-window misalignment. 
Since additional propagated samples may be costly to obtain, we develop a
class-conditioned diffusion model for generative augmentation of turbulence-degraded intensity images. 
The main contribution is a spectrum-aware diffusion objective: a pixel-domain loss combined with a Fourier-domain Bregman regularizer designed to preserve
high-frequency speckle statistics. 
We prove that this hybrid objective is consistent with the posterior-mean regression target of the diffusion model and show that generated samples substantially improve low-data classification.}

\end{abstract}

\section{Introduction} \rev{Structured light beams carrying Orbital Angular Momentum (OAM)~\cite{Willner:15, gbur2016singular, shen2019optical, borcea2020multimode, forbes2021structured} provide a rich family of spatial optical fields for free-space communication and sensing. A basic receiver-side task in this setting is to identify the transmitted structured-light symbol after propagation through a random medium. This classification problem is challenging because small-scale fluctuations in the refractive index generate complex speckle patterns that cause phase distortions, intensity scrambling, and mixing between spatial features~\cite{cox2020structured}.}

In this work, the turbulence-degraded intensity patterns are generated from numerical simulations of a paraxial It\^o--Schr\"odinger propagation model~\cite{andrews2001laser,bal2025splittingalgorithmsparaxialitoschrodinger}.
Concretely, the complex field envelope evolves under a stochastic Schr\"odinger equation with random refractive-index potential. We record the resulting intensity at the receiver plane. This forward model provides a controlled way to vary turbulence strength and to generate labeled samples associated with different OAM-mode classes.

Given each received intensity image for different realizations of the random fluctuations, we train a classifier to infer the transmitted mode class.
Deep learning methods, in particular Convolutional Neural Networks (CNNs), have become effective tools for this task~\cite{Krenn_2016,Doster:17, Avramov-Zamurovic:23, jarrett2024neural, zhang2025structured}, at least for limited turbulence effects.

Theoretical results on the structure of speckle \cite{bal2024complex, bal2025long, garnier2016fourth}, show that while scintillation (speckle strength) may be large, speckle dots may still be efficiently denoised as their spatial scale is typically small compared to the OAM-mode structure. 

The main objective of the paper is to test the classification of OAM modes under practical \rev{turbulence} intensity levels. 
We first generate labeled speckle patterns from a stochastic Schr\"odinger numerical model.
For classification, we consider two architectures: \rev{SimpleCNN} and ResNet-18. \rev{The former is less expensive to train but provides sub-optimal classification. The latter offers a good trade-off between cost and efficiency.}  \rev{Since additional optical measurements or high-resolution propagation simulations may be costly, we also study generative augmentation as a way to supplement small training sets.}
\rev{For this purpose,} we develop a physics-aware generative augmentation framework based on denoising diffusion probabilistic models (DDPMs)~\cite{ho2020denoisingdiffusionprobabilisticmodels}.
\rev{Here, physics-awareness means that the learning inputs are receiver-plane optical observables generated by a specified stochastic propagation model, and that the generative loss uses Fourier-domain speckle statistics rather than only pixel-wise image error.}

\rev{Unlike natural images, turbulent optical intensities exhibit characteristic speckle-induced high-frequency statistics~\cite{goodman2007speckle, andrews2001laser}.} We therefore combine a standard pixel-wise diffusion objective with an additional spectral consistency term that enforces the power spectral density of generated samples, and align the generative preprocessing with the downstream CNN input pipeline.

The rest of the paper is organized as follows.
Section~\ref{sec:simulation} introduces the stochastic beam-propagation model, the numerical solver, and \rev{the structured-light dataset generated for the classification experiments; the incident source alphabet follows the OAM superposition design of \cite{Avramov-Zamurovic:23}}.
Section~\ref{sec:classification} defines the classification task, the preprocessing and evaluation protocol, and the controlled comparisons for classifier design and robustness.
Section~\ref{sec:generative} presents the conditional diffusion model, the hybrid spatial-spectral training objective with theoretic proof, and the augmentation results for classification.
Section~\ref{sec:conclusion} summarizes the main finding and conclusion of the work. Additional detail on the numerical simulations and examples of applications are presented in the Supplementary Material.

\section{Beam propagation in random media}\label{sec:simulation}
This section presents the \rev{mathematical model and numerical protocol underlying the structured-light dataset used in this paper}. \rev{The incident source alphabet is taken from the OAM superposition design of \cite{Avramov-Zamurovic:23}. The propagated intensity samples used for training, validation, and testing are then generated in this work by applying the propagation model, random-medium parameters, and discretization described below.} 
We use a paraxial model for the complex field envelope given by \cite{andrews2001laser}:
\begin{equation}\label{eq:isequation}
    \big(2i k \partial_z + \Delta_x + k^2 \nu (z,x) \big) u=0,\qquad u(z=0,x)=u_0(x),
\end{equation}
where $u(z,x)$ denotes the complex beam envelope, $z$ is the longitudinal propagation coordinate, and $x\in\mathbb{R}^2$ is the transverse coordinate. The parameter $k$ denotes the carrier wavenumber, $\Delta_x$ is the transverse Laplacian, and $\nu(z,x)$ denotes the random potential. The potential is modeled by a mean-zero stationary random process with correlation length $l$ and standard deviation parameter $\sigma$.  

The initial condition $u_0(x)$ is modeled as a superposition of Laguerre-Gaussian (LG) modes, which form an orthogonal basis carrying OAM~\cite{Avramov-Zamurovic:23}.
In cylindrical coordinates \rev{$(\rho,\varphi)$}, the field distribution of a single LG mode at the source plane ($z=0$) is expressed as:
\begin{equation}
    U_{p,l}(\rho, \rev{\varphi}, 0) = \sqrt{\frac{2p!}{\pi (p+|l|)!}} \frac{1}{w_0} \left(\frac{\rho \sqrt{2}}{w_0}\right)^{|l|} L_p^{|l|}\left(\frac{2\rho^2}{w_0^2}\right) e^{-\frac{\rho^2}{w_0^2}} e^{i l \rev{\varphi}},
    \label{eq:lg_beam}
\end{equation}
where $w_0$ is the beam waist radius, $L_p^{(|l|)}$ is the generalized Laguerre polynomial with radial index $p$ and topological charge $l$.
\rev{Following \cite{Avramov-Zamurovic:23},} we construct the source alphabet by superposing four distinct basis modes: $(p,l)\in\{(0,1),(1,4),(0,-6),(1,8)\}$ (\rev{Fig.}~\ref{fig:dataset_overview}). After the superposition, the beam is normalized to unit power.
\rev{As noted in~\cite{jarrett2024neural}, performance of classifiers vary with choices of such basis modes. The present study is restricted to the finite $15$-class alphabet described above; it does not attempt to resolve arbitrary denser choices of $l$, additional radial indices $p$, or all positive and negative topological charges, which are left for future investigation.}
\paragraph{Numerical propagation via the split-step Fourier method.}
In problems involving high frequency wave propagation over long distances, the paraxial equation~\eqref{eq:isequation} is statistically well approximated by the It\^o Schr\"{o}dinger model, where the random potential is replaced by white noise along $z$~\cite{fannjiang2004scalinglimitsbeamwave, garnier2009coupled, bal2025longMMS}. We briefly summarize this scaling regime here, with further details expanded in the Supplementary Materials. \rev{Before stating the rescaled stochastic equation, we introduce the nondimensional parameters used in it.} Let $\theta\ll 1$ and $k_0=O(1)$ be non-dimensional parameters such that $k=\frac{k_0}{\theta\ell}$. For instance, when \rev{$k=10^7\,\mathrm{m}^{-1}$} and \rev{$\ell=2\times 10^{-3}\,\mathrm{m}$}, this gives $\theta=5k_0\times 10^{-5}$. Similarly for $\sigma_0=O(1)$, we assume a sufficiently weak turbulence regime of \rev{$\sigma=\sigma_0\theta^{3/2}$}. This roughly corresponds to fluctuations in the refractive index of order $10^{-7}-10^{-6}$. Anticipating such a separation of scales, we rescale the computational domain as $z\to \frac{\ell z}{\theta}, x\to \ell x$. The choice of parameters mentioned above translates to distances in units of \rev{$80\,\mathrm{m}$} along $z$ and \rev{$2\,\mathrm{mm}$} along $x$. Asymptotically, this gives the It\^o-Schr\"{o}dinger equation in rescaled coordinates as
\begin{equation*}
    \mathrm{d}u=\frac{i}{2k_0}\Delta_xu\mathrm{d}z-\frac{k_0^2\sigma_0^2}{8}R_0(0)u\mathrm{d}z+i\frac{k_0\sigma_0}{2}u\mathrm{d}B_0\,.
\end{equation*}
\rev{This can now be solved efficiently using the split-step Fourier method (SSFM)~\cite{martin1988intensity, spivack1989split, Schmidt2010}; see also \cite{bal2025splittingalgorithmsparaxialitoschrodinger} for a theoretical validation.}

We discretize the transverse variable $x\in\mathbb{R}^2$ on a uniform Cartesian grid of size $N\times N$ (with $N=2048$ in our simulations) and grid spacing $\Delta x>0$; equivalently, the computational domain has side length $L:=N\Delta x$ (with $L=64$ in our simulations).
As an operator splitting method, SSFM approximates the propagation over a small step $\Delta z$ by composing a `refraction' step (multiplication by a random phase screen) with a `diffraction' step governed by the Laplacian. The diffraction propagator is the exponential operator $e^{\frac{i\Delta z}{2k}\Delta}$, \rev{applied} by a Fourier spectral method (FFT) on the discrete grid:
\begin{equation}\label{eq:diffraction_operator_isml}
	    \exp\!\Big(\frac{i\Delta z}{2k_0}\Delta\Big)v
	    =\mathcal{F}^{-1}\!\left(e^{-i\frac{|{\bf k}|^2}{2k}\Delta z}\,\mathcal{F}\{v\}\right),
\end{equation}
where ${\bf k}=(k_1,k_2)$ ranges over the discrete Fourier grid
\begin{equation}\label{eq:fourier_grid_isml}
	    k_\ell = \frac{2\pi}{L}\,n_\ell=\frac{2\pi}{N\Delta x}\,n_\ell,
        \quad -\frac N2\leq n_\ell \leq \frac N2-1,\ 
        \ell\in\{1,2\}.
\end{equation}
We write the (first-order) split update explicitly on the discrete longitudinal grid $z_j:=j\Delta z$. Let $u^j(x)\approx u(z_j,x)$ and let $\nu^j(x)$ denote the random potential used on the step from $z_j$ to $z_{j+1}$.
The SSFM update is the composition of:
\begin{equation}\label{eq:ssfm_refraction_isml}
    v^j(x)=\exp\!\Big( \frac i2 k\nu^j(x)\Big)\,u^j(x)\qquad\text{(refraction step)},
\end{equation}
followed by
\begin{equation}\label{eq:ssfm_diffraction_isml}
    u^{j+1}(x)=\mathcal{F}^{-1}\!\left(e^{-i\frac{|k|^2}{2k_0}\Delta z}\,\mathcal{F}\{v^j\}\right)\qquad\text{(diffraction step)}.
\end{equation}
Equivalently, by substitution,
\begin{equation}\label{eq:ssfm_scheme_isml}
    u^{j+1}(x)=\mathcal{F}^{-1}\!\left(e^{-i\frac{|{\bf k}|^2}{2k}\Delta z}\,\mathcal{F}\left\{\exp\!\big(\frac{i}{2}k\nu^j(x)\big)\,u^j(x)\right\}\right).
\end{equation}
Here $\mathcal{F}$ and $\mathcal{F}^{-1}$ denote the $2$D discrete Fourier transform and its inverse on the $N\times N$ grid.

The random medium $\nu$ in ~\eqref{eq:isequation} is prescribed with Power Spectral Density (PSD)~\cite{Schmidt2010}. In the It\^o-Schr\"{o}dinger regime, the potential is effectively white noise along $z$. For the lateral direction $x$, we use a modified isotropic spectrum, scaled by the variance prefactor $\sigma_0^2$:
\begin{equation}\label{eqn:PSD}
    \Phi({\bf k}) = \frac{\sigma_0^2 l_0^2}{4\pi^2} \exp\!\left(-\frac{l_0^2\vert {\bf k}\vert^2}{4\pi^2}\right),
\end{equation}
where ${\bf k}$ represents the coordinate vector in the $2$D frequency domain, while $(\sigma_0^2,l_0)$ are the PSD prefactor and correlation-length parameter used in \rev{our} simulations.
During the implementation of ~\eqref{eq:ssfm_scheme_isml}, we generate the discrete random potential $\nu^j(x):=\nu(z_j,x)$ at each propagation step by a Fourier synthesis method consistent with the target spectrum $\Phi$.
Specifically, the Fourier coefficients $
\hat{\nu}^j({\bf k})$ are sampled as $$
\hat{\nu}^j({\bf k}) = \sqrt{\Phi({\bf k})\,\Delta z}\,\xi^j({\bf k}),
$$ 
where $\{\xi^j({\bf k})\}_k$ is a complex Gaussian white noise field with $\xi^j({\bf k})\sim\mathcal{CN}(0,1)$ (independent for each $({\bf k},j)$). 
The spatial realization is then obtained by the inverse FFT, where $\text{Re}\{\cdot\}$ extracts the real component:
\begin{equation}
\nu^j(x) = \text{Re} \left\{ \mathcal{F}^{-1} [ \hat{\nu}^j(k) ] \right\}.
\end{equation}
\rev{In all propagation simulations used to generate the learning dataset in this paper, we fix $(z,\sigma_0,l_0,k_0)=(5,1.1,1.5,0.5)$ with discretization $\Delta z=\frac{1}{32}$.} We also set the beam waist to $w_0=4$ in the same rescaled transverse units as $x$.

\paragraph{Statistics of developed speckle.}
\rev{Fig.~\ref{fig:dataset_overview} visualizes the dataset construction. The left block shows the $15$ incident source patterns obtained from the OAM superposition alphabet of \cite{Avramov-Zamurovic:23}. The right block shows representative propagated intensity patterns generated in this work from those incident beams under the random-medium setting specified above.}
This comparison makes explicit the visual transition from structured source-scale variation to turbulence-induced speckle corruption.

\begin{figure*}[t]
    \centering
    \includegraphics[width=\textwidth]{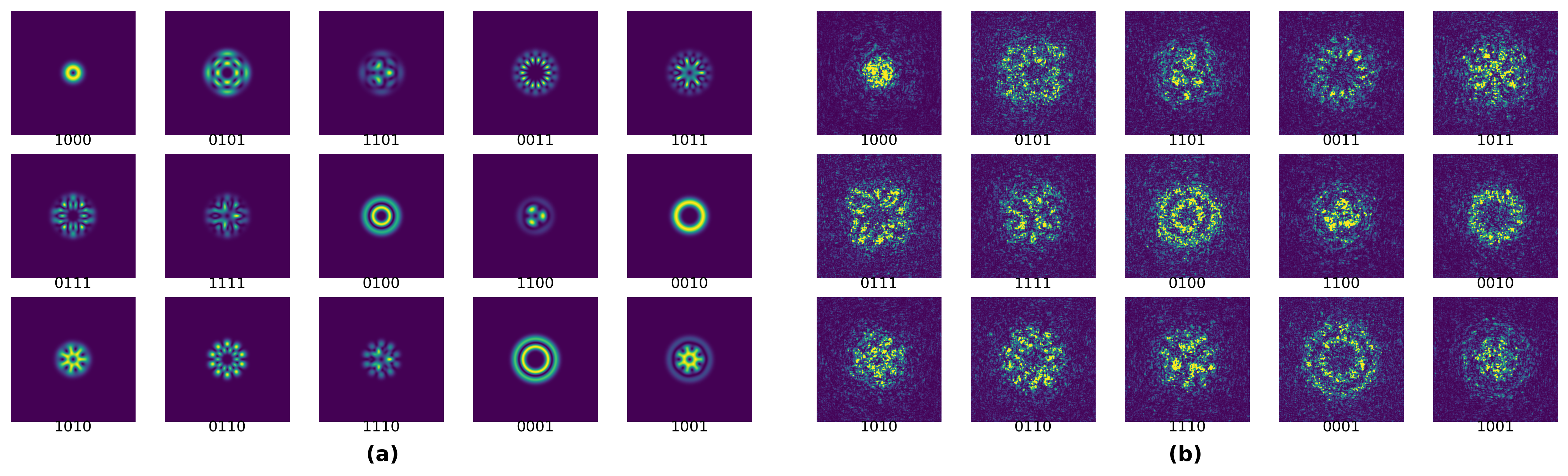}
    \caption{\rev{{Overview of the structured-light dataset generated in this work.} (a) The $15$ incident OAM superpositions, following the source-beam design of \cite{Avramov-Zamurovic:23}, used as class labels. (b) Representative propagated intensity patterns  after transmission through the random medium. Panel (b) is generated under the current default setting $(z,\sigma_0,l_0)=(5,1.1,1.5)$ using a split-step Fourier method.}     }
    \label{fig:dataset_overview}
\end{figure*}

To gain intuition on the structure of the wavebeams, we recall results on the long distance speckle patterns. In that regime, wavefields are governed by mean-zero complex Gaussian distributions, which implies exponential statistics for the intensity~\cite{garnier2016fourth, bal2024complex}. Such field distributions are uniquely determined by their correlation function obeying a diffusion equation. Specifically, the statistical average of the intensity, $\mathbb{E}I(x)$ is given by $G(z^3,\cdot)\ast I_0$, where $I_0$ is the initial intensity and $G(t,x)$ is the fundamental solution to the diffusion equation 
$$\partial_tG+\frac{1}{24}\nabla_x\cdot\Gamma\nabla_xG=0$$ 
with $G(0,x)=\delta_0(x)$.
Here, the (negative definite) diffusion tensor $\Gamma=\nabla^2R(0)$ is determined by the lateral covariance $R$ of the random medium. The scintillation index in such a setting is equal to unity.
As a natural quantifier of the strength of intensity fluctuations induced by the random medium, the scintillation index function is defined as
\begin{equation*}
    S(x):=\frac{\mathbb{E}[I^2(x)]-(\mathbb{E}[I(x)])^2}{(\mathbb{E}[I(x)])^2},
\end{equation*}
where $\mathbb{E}$ denotes expectation over turbulence realizations and $I(x)=|u(z,x)|^2$ is the received intensity.

For a plane-wave propagation ($u_0=1$), the spatially averaged empirical scintillation is found numerically to be $\overline{S} \approx 0.84$. \rev{This average is computed on the ROI $\Omega_{0.1}:=\{x:\mathbb{E}[I(x)]\ge 0.1\max_{x'}\mathbb{E}[I(x')]\}$; for the plane-wave diagnostic in the present setting this ROI occupies the full computational field. The scintillation statistic is used only to characterize the turbulence strength of the simulated dataset and is not an input to the classifier.} This corresponds to a large contribution of random speckle in the intensity maps.

\section{Classification by neural networks}\label{sec:classification}
 
We now describe the classifier input, the associated pre-processing and representation maps, and the controlled variables and their corresponding results in the classification experiments.

\subsection{Classifier with One-hot Encoding}\label{ssec:classifier}
Let $y\in\{1,\dots,15\}$ denote the class label and $e_y\in\{0,1\}^{15}$ its one-hot encoding.
Given an input image $x_{\mathrm{in}}\in\mathbb{R}^{H\times W}$, the classifier $h_\theta$ outputs a probability vector
\begin{equation}
    h_\theta(x_{\mathrm{in}})=(p_1,\dots,p_{15})\in[0,1]^{15},\qquad \sum_{i=1}^{15} p_i=1,
\end{equation}
obtained by applying a softmax map to the network logits.
We train $h_\theta$ by minimizing the cross-entropy loss on the training split and report the top-$1$ accuracy on an independent test split.
Concretely, given training samples $\{(x_{\mathrm{in}}^{(n)},y^{(n)})\}_{n=1}^N$ with class labels $y^{(n)}\in\{1,\dots,15\}$, the empirical cross-entropy objective is
\begin{equation}\label{eq:cross_entropy_isml}
    \begin{aligned}
        \mathcal{L}_{\mathrm{CE}}(\theta)
    &= -\frac{1}{N}\sum_{n=1}^N \log\Big(h_\theta(x_{\mathrm{in}}^{(n)})_{y^{(n)}}\Big)
    \\&
    = -\frac{1}{N}\sum_{n=1}^N \sum_{i=1}^{15} (e_{y^{(n)}})_i \log\big(p_i^{(n)}\big),
    \end{aligned}
\end{equation}
where $p^{(n)}=h_\theta(x_{\mathrm{in}}^{(n)})$ and $h_\theta(x_{\mathrm{in}}^{(n)})_{y^{(n)}}$ denotes the $y^{(n)}$-th component of the predicted probability vector, namely the probability assigned to the true class of the $n$-th sample.
\subsection{Neural Network Design}\label{ssec:nn_design}
This subsection specifies the two classifier architectures used in this paper; see \rev{Fig.}~\ref{fig:classifier_architecture_isml}.
We employ two convolutional architectures for comparative analysis: a lightweight SimpleCNN~\cite{lecun1998gradient}
and a standard ResNet-18~\cite{he2015deepresiduallearningimage}.
The former serves as a low-capacity baseline, while the latter represents a deeper residual network with substantially higher expressive power.
\rev{The architecture-level parameters were fixed a priori rather than selected by a hyperparameter search: SimpleCNN uses standard local convolutional blocks, and ResNet-18 follows the standard residual-stage specification with only the input and output layers adapted to one-channel images and $15$ classes.}

\begin{figure*}[t]
    \centering
    \small
    \begin{minipage}[t]{0.38\textwidth}
        \centering
        \begin{tikzpicture}[
            node distance=3.2mm,
            box/.style={draw, rounded corners, align=center, minimum height=6.5mm, minimum width=30mm},
            arr/.style={-{Latex[length=2mm]}, thick},
        ]
            \node[box] (in) {Input\\$1\times128\times128$};
            \node[box, below=of in] (b1) {Conv block 1\\$3{\times}3$, $32$ channel\\BN + ReLU + MaxPool};
            \node[box, below=of b1] (b2) {Conv block 2\\$3{\times}3$, $64$ channel\\BN + ReLU + MaxPool};
            \node[box, below=of b2] (b3) {Conv block 3\\$3{\times}3$, $128$ channel\\BN + ReLU + MaxPool};
            \node[box, below=of b3] (gap) {Global Avg.\ Pool\\$\to \mathbb{R}^{128}$};
            \node[box, below=of gap] (fc) {Linear\\$128\to 15$ logits};
            \draw[arr] (in) -- (b1);
            \draw[arr] (b1) -- (b2);
            \draw[arr] (b2) -- (b3);
            \draw[arr] (b3) -- (gap);
            \draw[arr] (gap) -- (fc);
        \end{tikzpicture}
        \vspace{1mm}\par
        \textbf{(a) SimpleCNN.}
    \end{minipage}
    \hfill
    \begin{minipage}[t]{0.58\textwidth}
        \centering
        \begin{tikzpicture}[
            node distance=3.2mm,
            box/.style={draw, rounded corners, align=center, minimum height=6.5mm, minimum width=30mm},
            arr/.style={-{Latex[length=2mm]}, thick},
        ]
            \node[box] (in) {Input\\$1\times128\times128$};
            \node[box, below=of in] (stem) {Stem\\$7{\times}7$ conv, stride $2$\\BN + ReLU + MaxPool};
            \node[box, below=of stem] (s1) {Stage 1\\$64$ channel\\BasicBlock$\times 2$};
            \node[box, below=of s1] (s2) {Stage 2\\$128$ channel\\BasicBlock$\times 2$ (downsample)};
            \node[box, below=of s2] (s3) {Stage 3\\$256$ channel\\BasicBlock$\times 2$ (downsample)};
            \node[box, below=of s3] (s4) {Stage 4\\$512$ channel\\BasicBlock$\times 2$ (downsample)};
            \node[box, below=of s4] (gap) {Global Avg.\ Pool};
            \node[box, below=of gap] (fc) {Linear\\$512\to 15$ logits};
            \draw[arr] (in) -- (stem);
            \draw[arr] (stem) -- (s1);
            \draw[arr] (s1) -- (s2);
            \draw[arr] (s2) -- (s3);
            \draw[arr] (s3) -- (s4);
            \draw[arr] (s4) -- (gap);
            \draw[arr] (gap) -- (fc);

            \begin{scope}[xshift=41mm, yshift=-15mm]
                \tikzset{rbox/.style={draw, rounded corners, align=center, minimum height=5.5mm, minimum width=22mm}}
                \node[rbox] (rin) {Input};
                \node[rbox, below=of rin] (r1) {$3{\times}3$ conv, stride $s$\\BN + ReLU};
                \node[rbox, below=of r1] (r2) {$3{\times}3$ conv, stride $1$\\BN};
                \node[rbox, below=of r2] (radd) {Add};
                \node[rbox, below=of radd] (rout) {ReLU\\Output};
                \draw[arr] (rin) -- (r1);
                \draw[arr] (r1) -- (r2);
                \draw[arr] (r2) -- (radd);
                \draw[arr] (radd) -- (rout);
                \draw[arr] (rin.east) to[out=0,in=0,looseness=1.2] node[right] {\scriptsize skip} (radd.east);
                \node[align=left, anchor=west] at ([yshift=10mm]rin.north  west) {{BasicBlock:}};
                \node[align=left, anchor=west] at ([xshift=5mm]rin.north east) {\scriptsize stride $s\in\{1,2\}$\\ \scriptsize identity skip if $s=1$;\\ \scriptsize $1{\times}1$ conv, stride $2$ if $s=2$};
            \end{scope}
        \end{tikzpicture}
        \vspace{1mm}\par
        \textbf{(b) ResNet-18.}
    \end{minipage}
    \caption{\rev{{Classifier architecture diagrams.}} Each block indicates the main computational stages from input to logits. In panel (b), the inset expands a residual \emph{BasicBlock}; here $s$ denotes the convolution stride.}
    \label{fig:classifier_architecture_isml}
\end{figure*}

\paragraph{SimpleCNN.}
As shown in \rev{Fig.}~\ref{fig:classifier_architecture_isml}(a),
it consists of three $3\times 3$ convolutional blocks with channel widths $32$, $64$, and $128$, each followed by batch normalization, ReLU, and $2\times2$ max pooling.
The final feature map is reduced by global average pooling and mapped to $15$ logits by a linear layer.
This architecture has approximately $9.5\times 10^4$ trainable parameters.

\paragraph{ResNet-18.}
As shown in \rev{Fig.}~\ref{fig:classifier_architecture_isml}(b), it involves an initial $7\times7$ convolution and max-pooling stem, followed by four residual stages with channel widths $64$, $128$, $256$, and $512$ and block counts $(2,2,2,2)$.
Each basic block contains two $3\times3$ convolutions together with a skip connection; when spatial downsampling is required, the skip branch uses a $1\times1$ projection so that the tensor shapes agree.
After the final stage, global average pooling and a linear layer produce the $15$ logits.
This architecture has approximately $1.12\times 10^7$ trainable parameters.

\rev{We also benchmarked the computational cost of the two classifiers on an NVIDIA GeForce RTX 4090 GPU, using centered $64\times64$ intensity inputs, $75$ real training samples per class, batch size $32$, and the fixed $500$-epoch training budget.
For SimpleCNN, the training time was $187.7\,\mathrm{s}$, the evaluation time on the full held-out test split of $675$ samples was $0.110\,\mathrm{s}$, and the forward-pass time for $1000$ already preprocessed inputs was $0.0067\,\mathrm{s}$.
For ResNet-18, the corresponding times were $243.0\,\mathrm{s}$, $0.156\,\mathrm{s}$, and $0.0142\,\mathrm{s}$.
Thus ResNet-18 is substantially larger and slower than SimpleCNN, although its batched forward-pass cost remains small in the tested GPU setting.}

\paragraph{Training setup.}
\rev{Unless stated otherwise, the classification experiments use the following default setting: the propagated dataset generated in this work uses $z=5$, $\sigma_0=1.1$, and $l_0=1.5$;} the classifier input is the cropped intensity on a $64\times64$ window; convolution uses zero padding; training and validation use the centered crop (equivalently $S=0$); 
testing uses the central-window protocol; and the training set contains $N_{\mathrm{train}}=50$ samples per class.
For each experiment, we train from scratch by minimizing cross-entropy on a stratified split of the \rev{propagated} dataset.
The overall proportions of samples are $50\%$ for training, $20\%$ for validation, and $30\%$ for testing.
From the training subset we subsample to obtain exactly $N_{\mathrm{train}}$ labeled samples per class.
\rev{Optimization uses Adam with weight decay $10^{-5}$, batch size $32$, and a fixed budget of $500$ epochs; the learning rate is the only training parameter selected by an automated procedure, and is chosen per architecture by a standard learning-rate range test \cite{smith2017cyclicallearningratestraining}. Unless stated otherwise, all classifier accuracy tables report top-1 classification accuracies, in percent,
evaluated on the held-out test split.}
\rev{For these classifier tables,} we select the checkpoint with the highest validation accuracy for test-time evaluation and report means and standard deviations over the three random seeds (42, 100, 2023)
used in the controlled comparisons.
\subsection{Preprocessing, Cropping, and Controlled Shifts}\label{ssec:preprocessing}
The simulated intensity fields are produced on a $2048\times 2048$ grid (Section~\ref{sec:simulation}), which are too large for convolutional processing in real applications.
For $x_{\mathrm{raw}}\in\mathbb{R}^{2048\times 2048}$, we first apply a deterministic downsampling operator
\begin{equation}\label{eq:downsample}
    D:\mathbb{R}^{2048\times 2048}\to\mathbb{R}^{256\times 256},\qquad x:=D(x_{\mathrm{raw}}),
\end{equation}
thereby defining a $256\times 256$ `canvas' image $x$.
\rev{These preprocessing dimensions are not claimed to be globally optimal: the $2048\times2048$ grid is used to resolve the propagated speckle field, the $256\times256$ canvas gives a tractable learning input after deterministic average pooling, and the later $64\times64$ crop defines a fixed finite receiver window for controlled comparisons.}
\rev{Sensitivity to crop misalignment is quantified below, and additional comparisons of fixed/random shifts and convolutional padding are reported in the Supplementary Material.}

\paragraph{Cropping Operator and Normalization.}
From the normalized canvas $\hat{x}\in\mathbb{R}^{256\times 256}$, we extract a square window of side length $W\in\mathbb{N}$.
Let $\delta=(\delta_x,\delta_y)\in\mathbb{Z}^2$ be a pixel offset relative to the centered crop.
With $(u,v)\in\{0,\dots,W-1\}^2$ and centered top-left corner
\[
\Big(\frac{256-W}{2},\,\frac{256-W}{2}\Big),
\]
we define $C_\delta:\mathbb{R}^{256\times 256}\to\mathbb{R}^{W\times W}$ by
\begin{equation*}
    \big(C_\delta(\hat{x})\big)(u,v)=\hat{x}\Big(u+\frac{256-W}{2}+\delta_x,\;v+\frac{256-W}{2}+\delta_y\Big).
\end{equation*}
In all experiments, $\delta$ is restricted so that the crop lies inside the $256\times 256$ canvas.
This cropping step has three roles: (i)
reduce computational cost; (ii) 
model partial observation: in practical sensing pipelines one may only access a windowed portion of the received intensity due to finite aperture, sensor placement, or downstream region-of-interest selection; (iii) 
make the classification task more challenging by removing potentially informative regions, thereby enabling a controlled study of robustness to information loss and spatial misalignment. 

\paragraph{Dataset-level Normalization.}
We normalize $x$ using dataset-level constants $(\mu,\sigma)$ computed once from the training split and held fixed thereafter:
$\hat{x}=\frac{x-\mu}{\sigma}$,
where $\mu\in\mathbb{R}$ and $\sigma\in\mathbb{R}_{>0}$ are scalars since the input is single-channel.
This removes an arbitrary scale from the input and stabilizes optimization across different turbulence strengths and realizations.
We compute $\mu$ and $\sigma$ by aggregating all pixels over all training images after the deterministic pre-processing stage consisting of down-sampling \eqref{eq:downsample} and the crop defined above.

\paragraph{Shifting.}
To study robustness to spatial misalignment, we vary the crop offset $\delta$ according to either a fixed shift or a random shift, with shift magnitude $S\in\mathbb{N}$ measured in pixels.
In fixed-shift mode, training uses the deterministic offset $\delta=(S,S)$.
In random-shift mode, training samples use $\delta_x,\delta_y$ sampled independently and uniformly from $\{-S,\dots,S\}$.
Unless stated otherwise, we keep the training and validation protocols identical.

\subsection{Input Field}\label{ssec:representation}
We consider two input fields for training purposes: the cropped intensity and the autocorrelation function (ACF).

\paragraph{Cropped Intensity.}
For the intensity field, the classifier input is the cropped window $x_{\mathrm{in}} = C_\delta(\hat{x})\in\mathbb{R}^{W\times W}$.

\paragraph{Autocorrelation Function (ACF).}
For the ACF field, the classifier input is the spatial autocorrelation of the cropped window.
Given $w\in\mathbb{R}^{W\times W}$, we define a mean-subtracted autocorrelation. Let $\overline{w}$ be the empirical spatial mean of $w$ given by $\overline{w}=\frac{1}{W^2}\sum_{u=0}^{W-1}\sum_{v=0}^{W-1} w(u,v)$
and let $\mathcal{F}$ denote the $2$D discrete Fourier transform.
We form the unnormalized correlation map
\begin{equation}\label{eq:acf_raw_isml}
    \Gamma_w:= \mathcal{F}^{-1}\big(|\mathcal{F}(w-\overline{w})|^2\big)\in\mathbb{R}^{W\times W},
\end{equation}
and define the ACF as:
\begin{equation}\label{eq:acf_def_isml}
    \mathrm{ACF}(w)=\frac{\Gamma_w}{\Gamma_w(0,0)}.
\end{equation}
By construction, $\mathrm{ACF}(w)(0,0)=1$. ACF maps are significantly more stable statistically than intensity maps but tend to smooth out valuable wavebeam features~\cite{katz2014non}. Table~\ref{tab:shift_study_isml} compares the two input representations. For both architectures, the cropped intensity input is consistently more accurate than the ACF input. This indicates that the classifiers can suppress part of the speckle noise while still exploiting wavebeam features that are weakened in the ACF maps. Therefore, we present intensity-based classifications for the remainder of the paper.

\begin{table}[t]
\centering
\small
\setlength{\tabcolsep}{4pt}
\caption{Input-representation comparison under the default setting. }
\label{tab:shift_study_isml}
\begin{tabular}{lcc}
\toprule
\textbf{Input} & \textbf{SimpleCNN} & \textbf{\rev{ResNet-18}} \\
\midrule
intensity & \rev{$92.59\pm1.27$} & \rev{$93.48\pm1.18$} \\
ACF       & \rev{$85.48\pm1.41$} & \rev{$86.62\pm1.74$} \\
\bottomrule
\end{tabular}
\end{table}

\subsection{Shift Study Under Controlled Misalignment}\label{ssec:shiftstudy}

A complementary comparison between deterministic fixed-shift training and random-shift training at matched $S$ is deferred to Supplementary Material Section D, Table 4.
\rev{We focus on random shifts in the main text, since they model non-systematic receiver misalignment and perform more robust than fixed shifts in the controlled comparison.}
In Table~\ref{tab:shift_magnitude_random_isml}, we report how the shift magnitude $S$ affects accuracy when the training and validation crops are sampled randomly within the range determined by $S$.
For SimpleCNN, the accuracy is nearly unchanged from $S=0$ to $S=16$, but then decreases sharply as the shift range increases further, which indicates that larger crop perturbations remove class-discriminative local structure for this architecture. 
\rev{For ResNet-18,} a small random-shift range slightly improves the mean accuracy at $S=16$, whereas larger shift magnitudes degrade performance.
\rev{Thus, in the present regime the ResNet-18, as a deeper network, remains more robust than SimpleCNN across all tested shift magnitudes, although its accuracy also decreases for large crop perturbations.}

\begin{table}[t]
\centering
\small
\setlength{\tabcolsep}{4pt}
\caption{Sensitivity to the shift magnitude $S$ under random-shift training and centered testing, with all other parameters at their default values.}
\label{tab:shift_magnitude_random_isml}
\begin{tabular}{lcc}
\toprule
\textbf{Shift magnitude $S$} & \textbf{SimpleCNN} & \textbf{\rev{ResNet-18}} \\
\midrule
0  & \rev{$92.59\pm1.27$} & \rev{$93.48\pm1.18$} \\
16 & \rev{$79.51\pm3.77$} & \rev{$93.73\pm1.13$} \\
32 & \rev{$64.05\pm5.69$} & \rev{$91.41\pm2.57$} \\
48 & \rev{$53.98\pm4.46$} & \rev{$88.15\pm3.60$} \\
\bottomrule
\end{tabular}
\end{table}

\subsection{Sensitivity to the Number of Training Sets}\label{ssec:trainsize}
We next study the dependence of classifier accuracy on the number $N_{\mathrm{train}}$ of training samples per class.
This controlled comparison quantifies sample efficiency: for each seed, we subsample the training set to $N_{\mathrm{train}}$ examples per class without replacement, when available, while keeping the validation and test sets unchanged.
Table~\ref{tab:trainsize_intensity_isml} reports the resulting accuracy for the default intensity baseline under the common evaluation protocol.
\rev{The accuracy increases substantially with $N_{\mathrm{train}}$.}
\rev{At $N_{\mathrm{train}}=25$, SimpleCNN slightly outperforms ResNet-18, suggesting that the larger residual architecture does not benefit from its higher capacity in this data-limited regime yet.}
\rev{This is also consistent with the possibility that the more complex network may overfit in the low-data regime.}
\rev{As $N_{\mathrm{train}}$ increases, ResNet-18 improves more rapidly and outperforms SimpleCNN at $N_{\mathrm{train}}=50$ and $75$.}

\begin{table}[t]
\centering
\small
\setlength{\tabcolsep}{4pt}
\caption{Training-set-size study for the default intensity baseline, with all parameters fixed at their default values except $N_{\mathrm{train}}$.}
\label{tab:trainsize_intensity_isml}
\begin{tabular}{lcc}
\toprule
$N_{\mathrm{train}}$ (per class) & SimpleCNN & \rev{ResNet-18} \\
\midrule
25 & \rev{$82.86\pm2.36$} & \rev{$79.85\pm1.29$} \\
50 & \rev{$92.59\pm1.27$} & \rev{$93.48\pm1.18$} \\
75 & \rev{$95.75\pm0.94$} & \rev{$97.19\pm0.53$} \\
\bottomrule
\end{tabular}
\end{table}

\section{Improved Classification by Augmented data}
\label{sec:generative}
This section introduces a conditional diffusion model used to generate synthetic turbulence-degraded intensity images for data augmentation in the OAM-classification task.
We specify the preprocessing used to define the generator input domain, the diffusion formulation with its prediction parameterizations, and a spectrum-aware training objective designed to preserve high-frequency speckle statistics.
\rev{The motivation is to reduce the cost of enlarging a scarce training set: once trained on propagated samples, the conditional generator can produce additional class-conditioned speckle images for classifier training.}
\rev{The generated samples are used only as a supplement; validation and testing remain restricted to propagated samples generated by the physical model.}

\subsection{Preprocessing and Normalization for Generation}
\label{ssec:gen_preproc}
Let $x_{\mathrm{raw}}\in\mathbb{R}^{2048\times 2048}$ denote a raw simulated intensity field (Section~\ref{sec:simulation}).
Training a diffusion model directly at this resolution is computationally prohibitive, hence we first downsample to a $256\times256$ image.
Concretely, we use the same deterministic downsampling operator $D$ as in ~\eqref{eq:downsample} and write $x:=D(x_{\mathrm{raw}})$ which implemented via average pooling in our code.
To train the model on a bounded domain, we rescale intensities as:
\begin{equation}\label{eq:dm_minmax_isml}
    \tilde{x} = 2\cdot\frac{x-x_{\min}}{x_{\max}-x_{\min}}-1,
\end{equation}
where $x_{\min}\in\mathbb{R}$ and $x_{\max}\in\mathbb{R}$ ($x_{\min}<x_{\max}$) are dataset-level scalars computed once from the training set by scanning all pixels across all training images after the downsampling.
Generated samples can then be mapped back to the original intensity scale for downstream evaluation.

\subsection{Conditional diffusion model and parameterizations}\label{ssec:gen_ddpm}
We employ a Denoising Diffusion Probabilistic Model (DDPM) conditioned on the OAM class $c\in\{1,\dots,15\}$.
Let $x_0$ denote a normalized training sample.
The forward diffusion process produces noisy states $x_t$ according to a variance schedule $(\beta_t)_{t=1}^T$:
\begin{equation}\label{eq:forward_ddpm_isml}
    x_t = \sqrt{\bar{\alpha}_t}\,x_0 + \sqrt{1-\bar{\alpha}_t}\,\epsilon,\qquad \epsilon\sim\mathcal{N}(0,I).
\end{equation}
where $\bar{\alpha}_t=\prod_{s=1}^t (1-\beta_s)$.
The reverse model is parameterized by a neural network $f_\theta(x_t,t,c)$ whose output is compared to a prediction target $y_{\mathrm{target}}$.
\rev{Concretely, during training we sample a real propagated image $x_0$ with class label $c$, diffusion time $t$, and Gaussian noise $\epsilon$.}
\rev{The noised image $x_t$ is then formed by \eqref{eq:forward_ddpm_isml}, and the network is trained to predict a target determined by the chosen parameterization.}
\rev{For example, if the sampled image belongs to class $c=7$ and the training configuration is $v$-prediction, then the network input is the triple $(x_t,t,7)$ and the regression target is the velocity variable $v_t$ defined in \eqref{eq:vpred_isml}.}
We consider three standard parameterizations, corresponding to different choices of the regression target.
First, in $\epsilon$-prediction, the network predicts the forward noise: $y_{\mathrm{target}}=\epsilon.$
Second, in $x_0$-prediction, the network predicts the clean sample: $y_{\mathrm{target}}=x_0.$
Third, in $v$-prediction, the network predicts the velocity variable: $y_{\mathrm{target}}=v_t$,
where 
\begin{equation}\label{eq:vpred_isml}
    v_t := \sqrt{\bar{\alpha}_t}\,\epsilon - \sqrt{1-\bar{\alpha}_t}\,x_0.
\end{equation}
The $v$-parameterization is often numerically stable across noise levels and is well-suited to settings where high-frequency structure must be preserved during sampling \cite{song2021scorebasedgenerativemodelingstochastic}.

\paragraph{U-Net Architecture and Training Details.}
We implement $f_\theta$ as a class-conditioned U-Net with six resolution levels and two convolutional layers per resolution block.
Concretely, the channel widths follow the multiplier pattern, $(128,256,384,512,512,512)$ channels, from high to low resolution, and we include self-attention at the lowest resolutions to capture long-range dependencies in the speckle patterns.
Unless stated otherwise, we train using 
AdamW on a $70/30$ train/validation split with initial learning rate $10^{-4}$, using batch sizes $8$ for training and $30$ for validation, for a fixed budget of $200$ epochs, and we retain the checkpoint with the lowest validation loss for sampling.
We use a ReduceLROnPlateau schedule with multiplicative factor $0.5$, patience $10$ epochs and minimum learning rate $10^{-9}$.

\subsection{Loss space}
We use the mean squared error in the \rev{pixel} domain
\begin{equation}\label{eq:dm_pixel_loss_isml}
\mathcal{L}_{\mathrm{pixel}}=\mathbb{E}\Big[\big\|f_\theta(x_t,t,c)-y_{\mathrm{target}}\big\|^2\Big],
\end{equation}
where the expectation is taken over the training data, the forward noise $\epsilon$, and the time index $t$.
\rev{This is a target-regression problem conditional on $(x_t,t,c)$: for each noisy image, diffusion time, and class label, the network estimates the corresponding denoising target.}
\rev{The choice of target and the choice of loss space are related but distinct.}
\rev{The target determines what the network output represents, while the loss space determines after which linear reparameterization the prediction error is measured.}
In light of the recent preprint \cite{li2025back} and the earlier work \cite{lu2025mathematical}, the selection of the norm $\|\cdot\|$ fundamentally affects the learning despite their equivalence under invertible linear transforms \eqref{eq:vpred_isml}; similar concepts appear in the numerical analysis theory of preconditioning~\cite{xu1992iterative}. Corresponding to the three choices of target, we define three norms in \eqref{eq:dm_freq_loss_isml}, denoted as $\|\cdot\|_{\mathbf{v}}$, $\|\cdot\|_{\boldsymbol{\epsilon}}$, and $\|\cdot\|_{\mathbf{x}}$. For instance, given $y_{\mathrm{target}}=\mathbf{x}_0$, under $\|\cdot\|_{\mathbf{v}}$ as loss space, \eqref{eq:dm_pixel_loss_isml} reads
\begin{equation}
    \mathcal{L}_{\mathrm{pixel}}
    =
    \mathbb{E}\Bigg[
        \Bigg\|
        \frac{\sqrt{\bar{\alpha}_t}\,x_t-f_\theta(x_t,t,c)}{\sqrt{1-\bar{\alpha}_t}}
        -
        \frac{\sqrt{\bar{\alpha}_t}\,x_t-x_0}{\sqrt{1-\bar{\alpha}_t}}
        \Bigg\|_2^2
    \Bigg],
\end{equation}
which is the mean-squared error after decoding the network output into the $\mathbf{v}$-parameterization used in the implementation.

\subsection{Hybrid Spatial-spectral Training Objective}\label{ssec:gen_loss}
Speckle intensities exhibit prominent high-frequency statistics, while pixel-wise regression losses are known to favor low-frequency components in many image models \cite{rahaman2019spectralbiasneuralnetworks}.
To make the generator sensitive to spectral content without changing the basic diffusion training pipeline, we use a hybrid objective
\begin{equation}\label{eq:dm_total_loss_isml}
    \mathcal{L}_{\mathrm{total}}=\mathcal{L}_{\mathrm{pixel}}+\lambda\,\mathcal{L}_{\mathrm{freq}},
\end{equation}
with $\lambda\geq0$. In all the experiments presented later, we take $\lambda=1$ as default setting.

\paragraph{Frequency-domain Regularizer.}
Let $\mathcal{F}$ denote the two-dimensional discrete Fourier transform.
We set $\rho(z)=|z|$ (applied element-wise) and define the frequency-domain regularizer by
\begin{equation}\label{eq:dm_freq_loss_isml}
    \mathcal{L}_{\mathrm{freq}}
    =\mathbb{E}\!\left[
        D_\rho\!\left(\mathcal{F}(y_{\mathrm{target}}),\,\mathcal{F}\big(f_\theta(x_t,t,c)\big)\right)
    \right],
\end{equation}
where $D_\rho$ is defined in ~\eqref{eq:bregmandivergence_isml}.
\rev{Here $\rho$ is a fixed convex penalty applied to each Fourier coefficient, and $\partial\rho$ denotes its subdifferential.}
\rev{For the choice $\rho(z)=|z|$, this subdifferential is $\partial\rho(z)=\{z/|z|\}$ when $z\neq0$ and $\partial\rho(0)=\{\xi:|\xi|\leq1\}$ at the origin.}
\rev{These objects are part of the loss definition and do not depend on the training or test split.}
\rev{The sample-dependent quantities are the Fourier coefficients of the target and of the network prediction.}
\rev{With $\rho(z)=|z|$, the regularizer compares spectral amplitudes rather than directly comparing complex Fourier phases.}
\rev{This is useful for speckle images because the power distribution over spatial frequencies is physically meaningful, whereas a global phase factor in the Fourier representation is not directly relevant to the received intensity statistics.}
\rev{Thus this term penalizes spectral discrepancies while remaining agnostic to a global phase factor.}
Similar frequency-domain losses without the Bregman divergence have also been proposed in image restoration and generation contexts; see, e.g., \cite{jiang2021focalfrequencylossimage}.
In comparison, we can establish Bayes-consistency of the proposed hybrid objective \eqref{eq:dm_freq_loss_isml}.
At the population level, we show that minimizing the hybrid loss recovers the conditional expectation of the regression target given $(x_t,t,c)$.
In the diffusion setting, this corresponds to posterior-mean optimality for the chosen prediction parameterization.

We first note that the pixel-domain term already satisfies this requirement.
Under the mild regularity condition $\mathbb{E}[\Vert y_{\mathrm{target}}\mid x_t,t,c \Vert]<\infty$, the squared loss $\mathbb{E}[\Vert f_\theta(x_t,t,c)-y \Vert^2_2\mid x_t,t,c]$ is uniquely minimized by $f^*(x_t,t,c)=\mathbb{E}[y_{\mathrm{target}}\mid x_t,t,c]$.

It remains to show that adding the frequency-domain regularizer does not change this minimizer.
We formalize this using a Bregman divergence defined in a transformed space (here, the Fourier domain):
\begin{equation}\label{eq:bregmandivergence_isml}
    D_\rho(x,y) = \rho(x)-\rho(y)-\mathrm{Re}\langle\xi_y, x-y\rangle, \qquad \xi_y\in\partial\rho(y)
\end{equation}
where $\partial\rho(y)$ denotes the subdifferential of $\rho$ at $y$ (and if $\rho$ is differentiable at $y$, then $\xi_y=\nabla\rho(y)$).
\rev{Intuitively, the Bregman divergence measures the excess value of the convex penalty at the target relative to the first-order affine approximation of that penalty at the prediction.}
\rev{In the present application, this construction gives a way to compare the target and predicted Fourier-domain quantities while retaining a conditional-mean optimality property.}
The formal definition leads to the following result.

\begin{theorem}[Consistency of Bregman Divergence Minimization]\label{thm:bregman_consistency_isml}
Let $x_0 \in \mathbb{R}^d$ be the target signal and $x_t$ be the noisy observation. Let $\mu = \mathbb{E}[x_0 | x_t]$ be the posterior mean. Let $\mathcal{T}: \mathbb{R}^d \to \mathbb{C}^d$ be an invertible linear operator, and let $\rho(\cdot)$ be a \textbf{convex} function (e.g., the $L_1$ norm).
Define the objective function $J(f)$ as the expected Bregman divergence induced by $\rho$:
\begin{equation}
    J(f) = \mathbb{E}_{x_0 | x_t} \left[ \sum_{k=1}^d D_\rho \left( (\mathcal{T}{x}_0)_k, (\mathcal{T}f({x}_t))_k \right) \right],
\end{equation}
where $D_\rho(a, b)$ is defined in ~\eqref{eq:bregmandivergence_isml} (applied component-wise).
Then, the conditional expectation $f^*(x_t) = \mu$ is a \textbf{global minimizer} of the objective function $J(f)$.
\end{theorem}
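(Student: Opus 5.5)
The approach is the classical argument that the mean minimizes the expected Bregman divergence in its second argument, as for Bregman-loss regression, adapted to a complex linear transform and a merely subdifferentiable $\rho$. Everything is done conditionally on $x_t$, so $f(x_t)$ is a fixed vector and the only randomness is $x_0\mid x_t$. We identify $\mathbb{C}\cong\mathbb{R}^2$, so that $\rho:\mathbb{C}\to\mathbb{R}$ is convex on $\mathbb{R}^2$. Its subgradients then satisfy, for all $a,b\in\mathbb{C}$ and $\xi_b\in\partial\rho(b)$,
\[
\rho(a)\ \ge\ \rho(b)+\mathrm{Re}\langle \xi_b,a-b\rangle .
\]
This is exactly the statement $D_\rho(a,b)\ge 0$ in \eqref{eq:bregmandivergence_isml}. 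We assume $\mathbb{E}[\|x_0\|\mid x_t]<\infty$ and $\mathbb{E}[|\rho((\mathcal{T}x_0)_k)|\mid x_t]<\infty$. For $\rho(z)=|z|$ the second condition follows from the first, since $\mathcal{T}$ is a bounded linear map.

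Fix $k$ and write $X_k:=(\mathcal{T}x_0)_k$, $m_k:=(\mathcal{T}\mu)_k$ and $b_k:=(\mathcal{T}f(x_t))_k$. Since $\mathcal{T}$ is linear and deterministic, conditional expectation commutes with it, so $\mathbb{E}[X_k\mid x_t]=m_k$. The first step expands the conditional expectation of a single term, using that $b_k$ and $\xi_{b_k}$ are $x_t$-measurable and that $\mathrm{Re}\langle\xi,\cdot\rangle$ is real-linear:
\[
\mathbb{E}\big[D_\rho(X_k,b_k)\mid x_t\big]=\mathbb{E}[\rho(X_k)\mid x_t]-\rho(b_k)-\mathrm{Re}\langle\xi_{b_k},m_k-b_k\rangle .
\]
Taking $b_k=m_k$ (that is, $f=\mu$), the linear term vanishes because $m_k-m_k=0$. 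This holds whichever subgradient $\xi_{m_k}$ is chosen, which is the point where nondifferentiability of $|\cdot|$ at the origin might otherwise cause trouble. We get
\[
\mathbb{E}\big[D_\rho(X_k,m_k)\mid x_t\big]=\mathbb{E}[\rho(X_k)\mid x_t]-\rho(m_k).
\]

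Subtracting the two identities, the $\mathbb{E}[\rho(X_k)\mid x_t]$ terms cancel and we obtain the three-point identity
\[
\mathbb{E}\big[D_\rho(X_k,b_k)\mid x_t\big]-\mathbb{E}\big[D_\rho(X_k,m_k)\mid x_t\big]=\rho(m_k)-\rho(b_k)-\mathrm{Re}\langle\xi_{b_k},m_k-b_k\rangle=D_\rho(m_k,b_k)\ \ge 0 .
\]
The inequality is the subgradient inequality above. Summing over $k=1,\dots,d$ gives
\[
J(f)-J(\mu)=\sum_k D_\rho\big((\mathcal{T}\mu)_k,(\mathcal{T}f(x_t))_k\big)\ \ge\ 0
\]
for every $f$, so $f^*=\mu$ is a global minimizer. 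Invertibility of $\mathcal{T}$ is not needed for this inequality; it only matters for uniqueness, and the theorem does not claim uniqueness. In fact uniqueness fails for $\rho=|\cdot|$, since $D_\rho(m,b)=0$ whenever $m$ and $b$ are positive multiples of each other. Combined with the unique pixel-loss minimizer noted before the theorem, and applied with $y_{\mathrm{target}}$ in place of $x_0$, this shows that $\mu$ minimizes $\mathcal{L}_{\mathrm{total}}$ in \eqref{eq:dm_total_loss_isml}.

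The step I expect to need the most care is the justification of the expansion and cancellation. It requires integrability of $\rho(X_k)$, so that we never subtract infinities, and a fixed, measurable selection $b\mapsto\xi_b$ of subgradients, so that $D_\rho$ is a well-defined function. For $\rho(z)=|z|$ the natural choice is $\xi_b=b/|b|$ for $b\ne0$ and $\xi_0=0$. Once these points are fixed, the argument is the short computation above.
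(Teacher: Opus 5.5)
Your proof is correct and is essentially the paper's argument: decompose $J$ componentwise, use $\mathbb{E}[(\mathcal{T}x_0)_k\mid x_t]=(\mathcal{T}\mu)_k$ so that the linear terms collapse, and identify the excess loss $g_k(b_k)-g_k(m_k)$ as $D_\rho(m_k,b_k)\ge 0$. The only difference is that the paper subtracts the two divergences inside the expectation, so $\rho(X_k)$ cancels pointwise, whereas you take expectations first and therefore assume integrability of $\rho(X_k)$, which is harmless. Your extra remarks are correct and make explicit what the paper leaves implicit: the fixed subgradient selection, the fact that invertibility of $\mathcal{T}$ is not needed, and the failure of uniqueness for $\rho=|\cdot|$ (so uniqueness in the corollary must come from the pixel term, contrary to the paper's remark after the proof).
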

\begin{proof} 
Fix the observation ${x}_t$ and let ${\mu}=\mathbb{E}[{x}_0|{x}_t]$.
Let $\mathbf{U}=\mathcal{T}{x_0}$ with conditional mean ${\nu}=\mathcal{T}{\mu}$.
The objective decomposes component-wise: $J(f) = \sum_kg_k(w_k)$, where $w_k=(\mathcal{T}f({x}_t))_k$ and 
\begin{equation*}
    g_k(w_k) = \mathbb{E}\!\left[D_{\rho}(U_k,w_k)\,\middle|\,x_t\right].
\end{equation*}
We decompose the loss at an arbitrary prediction \rev{$w_k$} against the loss at the mean $\nu_k$.
The difference is:
\begin{equation*}
     g_k(w_k) - g_k(\nu_k) = \mathbb{E}\!\left[D_{\rho}(U_k,w_k) - D_\rho(U_k,\nu_k)\,\middle|\,x_t\right].
\end{equation*}
We expand the Bregman divergence terms:
\begin{equation*}
    \begin{aligned}
        &D_\rho(U_k,w_k) - D_{\rho}(U_k,\nu_k) 
         \\=& \left(\rho(\nu_k)-\rho(w_k)-\text{Re}\langle\xi_{w_k},U_k-w_k\rangle+\text{Re}\langle\xi_{\nu_k},U_k-\nu_k\rangle\right).
    \end{aligned}
\end{equation*}
Now, we take the expectation with respect to $U_k$. Using the linearity of expectation and the fact that $\mathbb{E}[U_k]=\nu_k$:
\begin{equation*}
    \begin{aligned}
        \mathbb{E}[\text{Re}\langle\xi_{w_k},U_k-w_k\rangle] &= \text{Re}\langle\xi_{w_k},\mathbb{E}[U_k]-w_k\rangle
        = \text{Re}\langle\xi_{w_k},\nu_k-w_k\rangle
    \end{aligned}
\end{equation*}
Similarly, the term involving $\xi_{\nu_k}$ vanishes because $\mathbb{E}[U_k-\nu_k]=0$.
Substituting these expectation back, we get:
\begin{equation*}
    g_k(w_k) - g_k(\nu_k) = \rho(\nu_k)-\rho(w_k)-\text{Re}\langle\xi_{w_k},\nu_k-w_k\rangle
\end{equation*}
Recognizing the structure of the right-hand side, this is exactly the Bregman divergence $D_\rho(\nu_k,w_k)$:
\begin{equation*}
    g_k(w_k)-g_k(\nu_k) = D_{\rho}(\nu_k,w_k)
\end{equation*}
By the definition of Bregman divergence for a convex function $\rho$, we have $D_\rho(\nu_k,w_k)\geq 0 $.
Therefore, $g_k(w_k)\geq g_k(\nu_k)$ for all $w_k$.
This confirms
that $\nu_k$ minimizes the component-wise loss.
Transforming back to the spatial domain, $f^*({x}_t)=\mathcal{T}^{-1}{\nu}={\mu}$ is a global minimizer.
\end{proof}
This implies both $\mathcal{L}_{\mathrm{pixel}}$ and $\mathcal{L}_{\mathrm{freq}}$ admit the same conditional expectation as the unique global minimizer. The optimality is invariant under linear transformation of the prediction target and loss space induced by \eqref{eq:vpred_isml}, and we summarize the following corollary.
\begin{corollary}[Consistency of the Hybrid Objective]\label{cor:hybrid_consistency_isml}
Fix $(x_t,t,c)$ and let $Y$ denote the pixel-domain training target conditioned on $(x_t,t,c)$, so that $\mathcal{L}_{\mathrm{pixel}}$ corresponds to the conditional loss of $Y$.
Assume further that, for the chosen DDPM parameterization, $x_0$ is an affine function of $(x_t,Y)$ at fixed $t$ (this includes $x_0$-, $\epsilon$-, and $v$-prediction).
Then the population objective corresponding to ~\eqref{eq:dm_total_loss_isml} is uniquely minimized by the posterior mean estimator $\mathbb{E}[x_0\mid x_t,t,c]$.
\end{corollary}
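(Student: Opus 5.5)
The plan is to work conditionally on a fixed triple $(x_t,t,c)$ and reduce the hybrid population objective to a pointwise problem in the prediction $w:=f(x_t,t,c)$. The DDPM parameterizations are related by invertible affine maps at fixed $t$, so the hypothesis gives $x_0=a_t x_t+b_t Y$ with $b_t\neq 0$. Conversely, $Y=(x_0-a_t x_t)/b_t$. For example, $\epsilon$-prediction has $b_t=-\sqrt{1-\bar\alpha_t}/\sqrt{\bar\alpha_t}$, and $v$-prediction has $b_t=-\sqrt{1-\bar\alpha_t}$. Since $x_t$ is fixed under the conditioning, the predicted target $w$ and the implied clean estimate $\hat x_0=a_t x_t+b_t w$ are in bijection. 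Moreover,
\[
\mathbb{E}[Y\mid x_t,t,c]=\bigl(\mathbb{E}[x_0\mid x_t,t,c]-a_t x_t\bigr)/b_t .
\]
It therefore suffices to show that the hybrid loss, viewed as a function of $w$, is uniquely minimized at $w^*=\mathbb{E}[Y\mid x_t,t,c]$. Translating back through the affine map then identifies the minimizer with the posterior mean $\mathbb{E}[x_0\mid x_t,t,c]$. The same reasoning covers the choice of loss space, such as $\|\cdot\|_{\mathbf v}$. Decoding into another parameterization is again an invertible linear map $M_t$ of the error $w-Y$ (plus a term that cancels), so the pixel loss becomes $\mathbb{E}\|M_t(w-Y)\|^2$.

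\textbf{Step 1 (pixel term).} Expanding the square gives
\[
\mathbb{E}\bigl[\|M_t(w-Y)\|^2\mid x_t,t,c\bigr]=\|M_t(w-w^*)\|^2+\mathbb{E}\bigl[\|M_t(Y-w^*)\|^2\mid x_t,t,c\bigr].
\]
This uses that the cross term vanishes because $\mathbb{E}[Y-w^*\mid x_t,t,c]=0$, and it assumes $\mathbb{E}[\|Y\|^2\mid x_t,t,c]<\infty$. Since $M_t$ is invertible, the first term is a strictly positive quadratic in $w-w^*$ that vanishes only at $w=w^*$. Hence the pixel term is \emph{uniquely} minimized at $w^*$, which recovers the remark preceding Theorem~\ref{thm:bregman_consistency_isml}.

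\textbf{Step 2 (frequency term).} We apply Theorem~\ref{thm:bregman_consistency_isml} with $\mathcal T=\mathcal F$, which is an invertible linear map $\mathbb{R}^d\to\mathbb{C}^d$, and with $\rho(z)=|z|$, which is convex. The target signal is $Y$ rather than $x_0$, and the conditioning is on $(x_t,t,c)$. The proof of that theorem gives the exact identity
\[
\mathcal{L}_{\mathrm{freq}}(w)-\mathcal{L}_{\mathrm{freq}}(w^*)=\sum_k D_\rho\bigl((\mathcal F w^*)_k,(\mathcal F w)_k\bigr)\ge 0 .
\]
This identity holds for any fixed choice of subgradient selections. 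It requires $\mathbb{E}[|(\mathcal F Y)_k|\mid x_t,t,c]<\infty$, which follows from the second-moment assumption.

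\textbf{Step 3 (combine).} For $\lambda\ge 0$, the total loss satisfies
\[
\mathcal{L}_{\mathrm{total}}(w)-\mathcal{L}_{\mathrm{total}}(w^*)=\|M_t(w-w^*)\|^2+\lambda\sum_k D_\rho(\cdot,\cdot)\ge \|M_t(w-w^*)\|^2 .
\]
This is strictly positive whenever $w\neq w^*$, so $w^*$ is the unique minimizer. Optimizing over functions $f$ pointwise in $(x_t,t,c)$ then yields the population statement.

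The main obstacle is uniqueness. Theorem~\ref{thm:bregman_consistency_isml} only gives a global minimizer, and with $\rho=|\cdot|$ the Bregman divergence can vanish away from $w^*$: for instance, $D_\rho(a,b)=0$ whenever $a$ and $b$ are positive multiples of each other. The frequency term alone therefore cannot pin down $w^*$. The plan handles this by letting the strictly convex pixel term supply strictness, while the frequency term only contributes the nonnegative excess from Step 2. A secondary subtlety is the subgradient selection at $z=0$. I would fix a measurable selection $\xi$ and note that the Step 2 identity holds for any such choice.
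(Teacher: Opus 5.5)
Your proposal is correct and follows the paper's route. Like the paper, you combine the conditional-mean optimality of the pixel term, Theorem~\ref{thm:bregman_consistency_isml} applied with $\mathcal{T}=\mathcal{F}$ and $\rho=|\cdot|$ for the frequency term, and invariance under the affine reparameterizations of the target and loss space. Your version is also more careful than the paper's text, which claims that both $\mathcal{L}_{\mathrm{pixel}}$ and $\mathcal{L}_{\mathrm{freq}}$ have the conditional mean as their \emph{unique} minimizer. As you observe, $D_\rho$ with $\rho=|\cdot|$ vanishes whenever its arguments are positive multiples of each other, so uniqueness must come from the strictly convex pixel term; your excess identity in Step~3 makes this explicit. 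Your second-moment assumption is also the right one for the squared loss, rather than the paper's first-moment condition.
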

\rev{Corollary \ref{cor:hybrid_consistency_isml} implies that adding an arbitrary scale of $\mathcal{L}_{\mathrm{freq}}$ to the original loss function does not alter the population posterior-mean minimizer of the target-regression problem.}
\rev{This statement should be distinguished from the finite-sample numerical effect measured below.}
\rev{The theorem shows that the Fourier-domain Bregman term is statistically compatible with the DDPM regression target.}
\rev{In finite-sample neural-network training, the additional term changes the optimization emphasis by penalizing spectral mismatch in the generated speckle images.}

\subsection{Generative Augmentation Protocol and Results}\label{ssec:gen_aug}
We now describe the controlled augmentation experiment used to assess whether generated samples improve low-data classification.
Let $N_{\mathrm{real}}$ denote the number of real training samples per class and $N_{\mathrm{syn}}$ denote the number of synthetic samples per class drawn from a trained conditional diffusion model.
In the present study we fix $N_{\mathrm{real}}=25$ and $N_{\mathrm{syn}}=50$, so that each augmented training set contains $75$ samples per class in total.
The classifier is then trained on the union of the real and synthetic samples, while validation and testing continue to use only real data.
\rev{The Real-75 baseline measures the alternative of adding the same number of additional propagated samples.}

The synthetic images are sampled from four diffusion-model configurations.
\rev{Each configuration is indexed by an ordered pair consisting of the network prediction target and the training loss space.}
\rev{For instance, $\mathbf{v}$-pred / $\mathbf{x}$-loss means that the network is parameterized to predict the velocity variable $v_t$, but the prediction and target are decoded to the clean-image variable $x_0$ before the loss is evaluated.}
\rev{After training a generator, we sample a normalized synthetic image by running the reverse diffusion chain conditioned on a prescribed class $c$.}
\rev{We then invert the min--max normalization~\eqref{eq:dm_minmax_isml} to return the sample to the physical intensity scale.}
\rev{Finally, the synthetic intensity is passed through the same classifier preprocessing pipeline as a real propagated image: downsampling/cropping, training-set normalization, and inclusion in the classifier training set with its conditioning class as label.}
This controlled protocol ensures that any change in classifier accuracy is attributable to the synthetic augmentation itself rather than to a mismatch in preprocessing conventions.

\begin{table}[t]
\centering
\caption{\rev{Classification accuracy (\%) under data scarcity and generative augmentation for both classifier architectures. `Gen.\ config' denotes the diffusion prediction target / loss space used for synthetic sampling.}}
\label{tab:dm_aug_results_isml}
\footnotesize
\setlength{\tabcolsep}{2.4pt}
\begin{tabular}{@{}llcc@{}}
\toprule
\textbf{Data setting} & \textbf{Gen.\ config} & \textbf{\rev{ResNet-18}} & \textbf{\rev{SimpleCNN}} \\
\midrule
\rev{Real-25} & -- & \rev{$79.85 \pm 1.29$} & \rev{$82.86 \pm 2.36$} \\
\rev{Real-75} & -- & \rev{$97.19 \pm 0.53$} & \rev{$95.75 \pm 0.94$} \\
\rev{Real-25 + Syn-50} & \rev{$\mathbf{x}$-pred  / $\mathbf{v}$-loss} & \rev{$89.68 \pm 0.82$} & \rev{$78.91 \pm 0.67$} \\
\rev{Real-25 + Syn-50} & \rev{$\mathbf{v}$-pred  / $\mathbf{x}$-loss} & \rev{$91.75 \pm 2.08$} & \rev{$85.78 \pm 2.10$} \\
\rev{Real-25 + Syn-50} & \rev{$\mathbf{v}$-pred  / $\mathbf{v}$-loss} & \rev{$92.54 \pm 1.07$} & \rev{$86.77 \pm 0.75$} \\
\rev{Real-25 + Syn-50} & \rev{$\mathbf{v}$-pred  / $\boldsymbol{\epsilon}$-loss} & \rev{$90.72 \pm 0.99$} & \rev{$87.46 \pm 1.07$} \\
\rev{Syn-75} & \rev{$\mathbf{x}$-pred  / $\mathbf{v}$-loss} & \rev{$71.06 \pm 2.08$} & \rev{$25.09 \pm 5.80$} \\
\rev{Syn-75} & \rev{$\mathbf{v}$-pred  / $\mathbf{x}$-loss} & \rev{$88.49 \pm 1.72$} & \rev{$40.79 \pm 5.49$} \\
\rev{Syn-75} & \rev{$\mathbf{v}$-pred  / $\mathbf{v}$-loss} & \rev{$87.95 \pm 0.31$} & \rev{$74.77 \pm 2.89$} \\
\rev{Syn-75} & \rev{$\mathbf{v}$-pred  / $\boldsymbol{\epsilon}$-loss} & \rev{$85.43 \pm 0.99$} & \rev{$68.79 \pm 3.82$} \\
\bottomrule
\end{tabular}
\end{table}

\rev{For the $\lambda$-sensitivity study in Table~\ref{tab:lambda_comparison_simplecnn_isml}, only the frequency-loss weight is varied. 
The classifier is SimpleCNN and the generator uses the $\mathbf{v}$-prediction / $\mathbf{v}$-loss configuration. 
As for the details of dataset, the training set is Real-25 + Syn-50, and validation and testing use only real propagated samples.}
\rev{For the corresponding $\lambda=1$ mixed-data setting, Figures~\ref{fig:simplecnn_lambda1_confusion_isml} and~\ref{fig:resnet18_lambda1_confusion_isml} report the class-wise pooled confusion matrices for SimpleCNN and ResNet-18, respectively.}

\begin{table}[t]
\centering
\caption{\rev{Sensitivity to the frequency-loss weight $\lambda$ for SimpleCNN under the $\mathbf{v}$-prediction / $\mathbf{v}$-loss configuration. The data setting is Real-25 + Syn-50 under the default propagation setting $(z,\sigma_0,l_0)=(5,1.1,1.5)$, with real validation and test splits.}}
\label{tab:lambda_comparison_simplecnn_isml}
\footnotesize
\begin{tabular}{lcccc}
\toprule
\textbf{$\lambda$} & $0$ & $10^{-1}$ & $1$ & $10$ \\
\midrule
\textbf{Accuracy (\%)}    & \rev{$85.98 \pm 0.82$} & \rev{$87.16 \pm 0.82$} & \rev{$86.77 \pm 0.75$} & \rev{$78.27 \pm 1.54$} \\
\bottomrule
\end{tabular}
\end{table}

\rev{Table~\ref{tab:dm_aug_results_isml} shows that, under the present setting, synthetic augmentation substantially improves on the Real-25 baseline for ResNet-18, although it remains below the Real-75 baseline.}
\rev{For ResNet-18, the best default-$\lambda$ result is obtained with '$\mathbf{v}$-pred / $\mathbf{v}$-loss', which increases the mean accuracy from $79.85\%$ to $92.54\%$; the closely related '$\mathbf{v}$-pred / $\mathbf{x}$-loss' configuration gives a comparable result.}
\rev{The same table gives the corresponding SimpleCNN comparison.}
\rev{The SimpleCNN column is consistent with the main ResNet-18 conclusion in the following sense: mixed real--synthetic training with the $\mathbf{v}$-prediction generators outperforms the Real-25 baseline, but it does not reach the Real-75 baseline.}
\rev{For SimpleCNN, the '$\mathbf{v}$-pred / $\mathbf{v}$-loss' setting increases the mean accuracy from $82.86\%$ to $86.77\%$, while the '$\mathbf{v}$-pred / $\boldsymbol{\epsilon}$-loss' setting gives a comparable result at $87.46\%$.}
\rev{The '$\mathbf{x}$-pred / $\mathbf{v}$-loss' configuration is an exception for SimpleCNN, since it falls below the Real-25 baseline, indicating that the benefit of generated samples depends on both the generator parameterization and the classifier architecture.}
\rev{When the classifier is trained only on generated samples, the best Syn-75 result reaches $88.49\%$ for ResNet-18 and $74.77\%$ for SimpleCNN, both below the corresponding mixed real--synthetic training results.}

\rev{Table~\ref{tab:lambda_comparison_simplecnn_isml} further shows that, within the SimpleCNN $\mathbf{v}$-prediction / $\mathbf{v}$-loss setting, moderate frequency-domain regularization improves the classification accuracy relative to $\lambda=0$. The choices $\lambda=0.1$ and $\lambda=1$ give comparable accuracies within the empirical variation over seeds, while $\lambda=10$ leads to a clear degradation.}
\rev{This suggests that a moderate frequency-domain penalty is not harmful in this protocol, whereas an overly large value can overweight spectral matching relative to the pixel-space diffusion objective.}
\begin{figure}[t]
    \centering
    \includegraphics[width=0.9\linewidth]{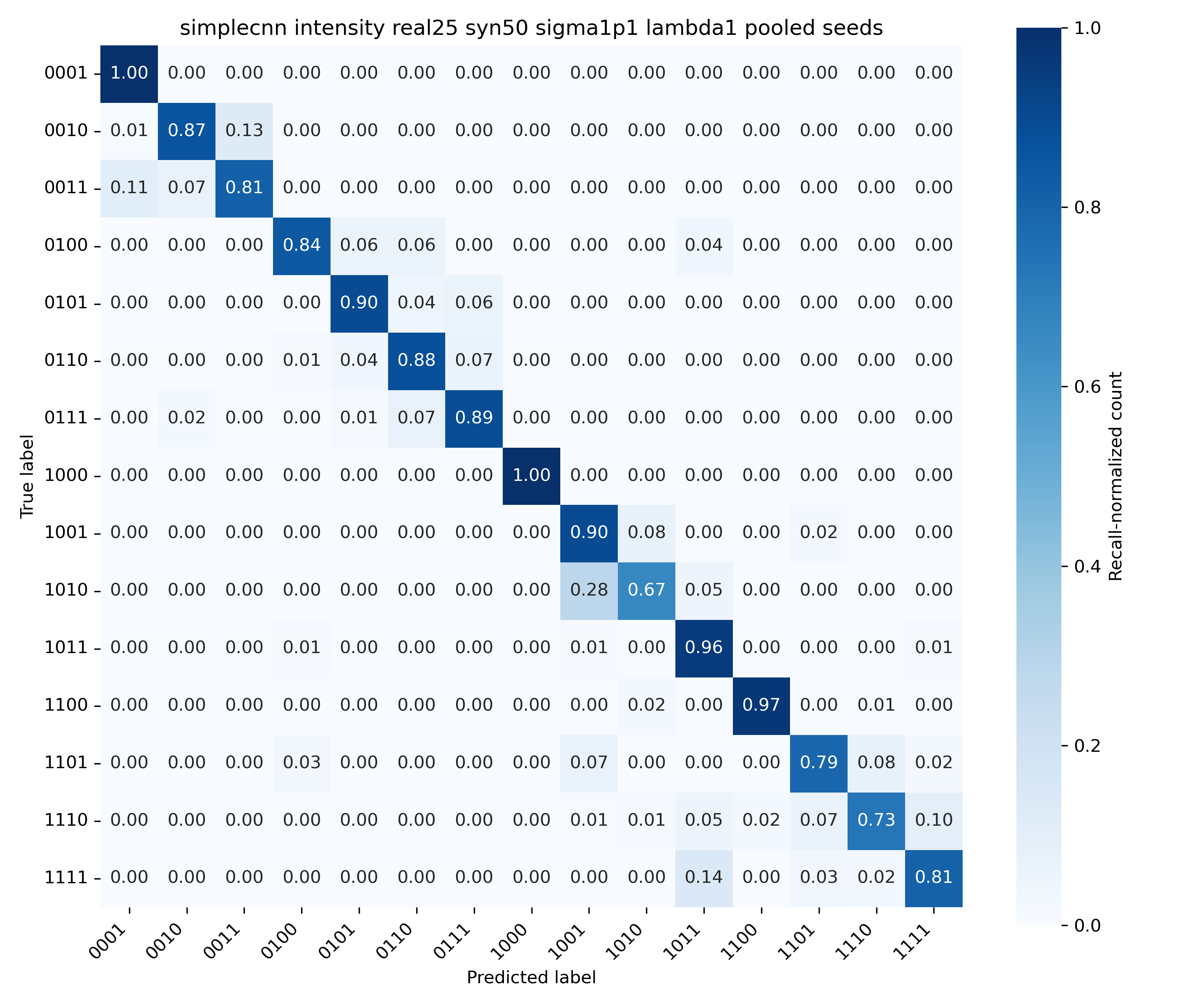}
    \caption{\rev{Pooled row-normalized confusion matrix for SimpleCNN under the Real-25 + Syn-50, $\mathbf{v}$-prediction / $\mathbf{v}$-loss, $\lambda=1$ setting. The matrix pools predictions from the three independently trained seeds used in Table~\ref{tab:lambda_comparison_simplecnn_isml}.}}
    \label{fig:simplecnn_lambda1_confusion_isml}
\end{figure}

\begin{figure}[t]
    \centering
    \includegraphics[width=0.9\linewidth]{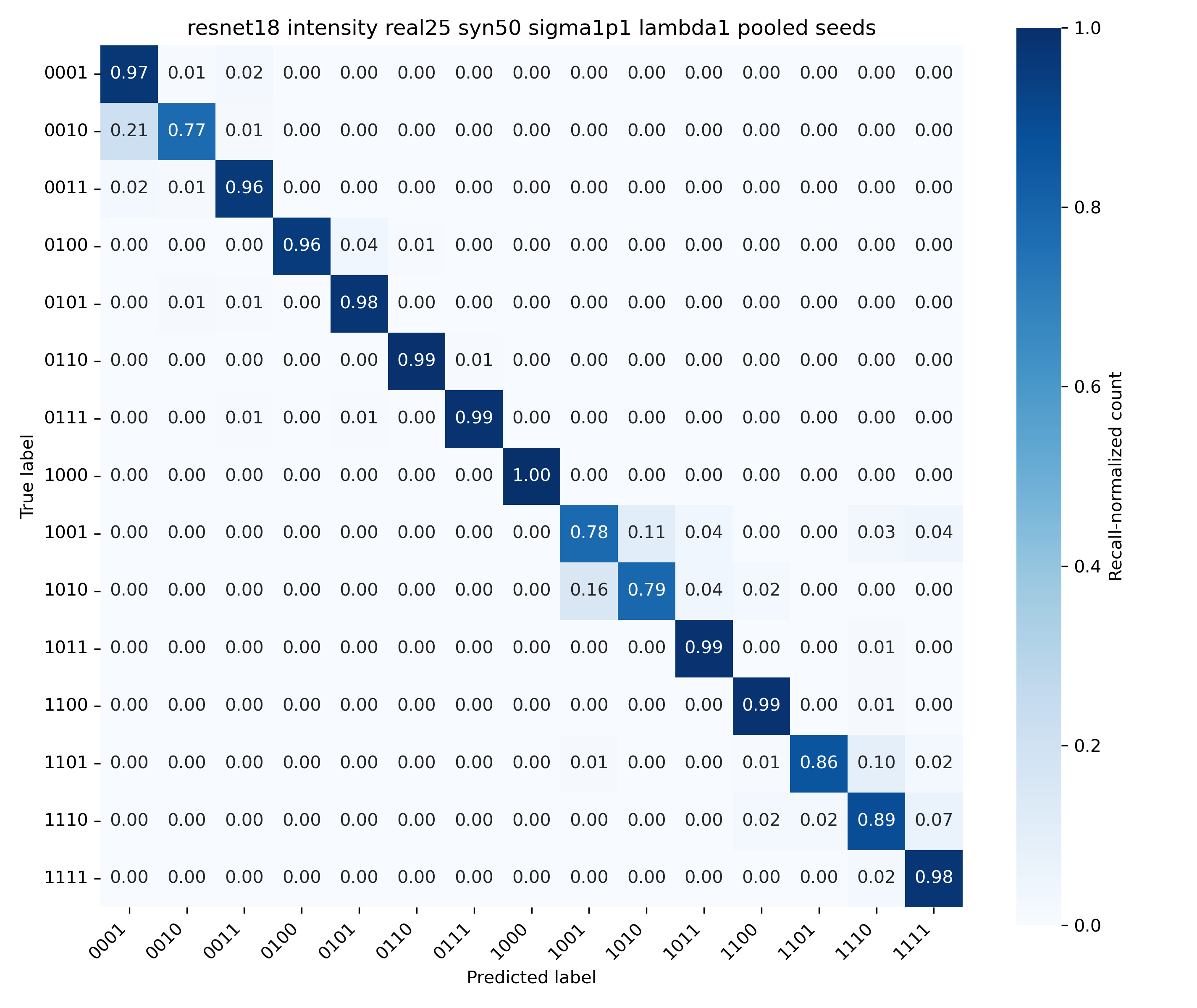}
    \caption{\rev{Pooled row-normalized confusion matrix for ResNet-18 under the Real-25 + Syn-50, $\mathbf{v}$-prediction / $\mathbf{v}$-loss, $\lambda=1$ setting. The matrix pools predictions from the three independently trained seeds used in Table~\ref{tab:dm_aug_results_isml}.}}
    \label{fig:resnet18_lambda1_confusion_isml}
\end{figure}

\rev{Within the present controlled protocol, the generated data provide a useful supplement in the low-data regime, but do not replace additional real samples.} Representative generated samples across configurations, together with a direct visual comparison between a real propagated sample and a generated sample, are deferred to \rev{Supplementary Material}.

\section{Findings and Conclusions}\label{sec:conclusion}
\rev{In this work, we benchmarked standard convolutional classifiers for structured-light classification and developed a diffusion-based augmentation method for low-data training.} \rev{We first specified the source alphabet, following \cite{Avramov-Zamurovic:23}, and generated propagated speckle patterns with our split-step Fourier propagation protocol.} We next tested neural networks for classifying $15$ beam classes from local Intensity and ACF observations. In Section~\ref{sec:classification}, we found that \emph{ResNet-18 with intensity as input} outperforms other choices. We also validated that increasing the training data size and keeping the observation window centered improved classification performance.  To target the practical limitations of available observation data, we developed in Section~\ref{sec:generative} a generative model based on the denoising diffusion probabilistic model with a new  \emph{consistent} hybrid training objective that improves generation quality in the high-frequency domain. \rev{The Bregman distance used in this objective 
is physics-motivated through the Fourier-domain structure of turbulent speckle intensities, whose high-frequency spectral content is important for generating useful training samples.} Augmented data from the generative models did improve classification rates. \rev{Finally, among possible preconditionings of the network approximation target and loss function, we found that \emph{$\mathbf{v}$-prediction and $\mathbf{v}$-loss} achieved optimal performance.}

\section*{Acknowledgments} The authors would like to thank Svetlana Avramov-Zamurovic for many enlightening discussions on the classification of structured light. This work was funded in part by ONR Grant N00014-26-1-2017 and NSF Grant DMS-2306411. ZW was partly supported by NTU SUG-023162-00001, MOE AcRF Tier 1 Grant RG17/24 and SPMS Collaborative Research Award.

\section*{Disclosures} The authors declare no conflicts of interest.

\section*{Data Availability Statement} Data underlying the results presented in this paper are available in \href{https://github.com/aokun023/IS-ML-classifier-code}{https://github.com/aokun023/IS-ML-classifier-code}.

\bibliographystyle{unsrtnat}
\bibliography{citation}

\appendix
\section{Supplementary Analyses and Numerical Details}
The remaining material collects the numerical details, validation diagnostics, and auxiliary controlled comparisons that support the main text.

\subsection{Wavebeam Propagation Model.}\label{supp:propmodel} We start from the standard paraxial approximation to the Helmholtz wave equation \cite{andrews2001laser}
\begin{equation}
  \big(2i k \partial_z + \Delta_x + k^2 \nu (z,x) \big) u=0,\qquad u(z=0)=u_0.
\end{equation}
Here $k=2\pi/\lambda=\omega/c$ is the effective wavenumber and $\nu(z,x)$ modeling the random fluctuations is a Gaussian field to simplify the analysis. More specifically, we assume that
\[
  \nu(z,x) = \sigma \nu_0(\frac z \ell, \frac x \ell)
\]
for a strength parameter $\sigma$, a correlation length $\ell$ and a fixed Gaussian field $\nu_0$. Then, for an isotropic stationary field
\[
  C(z,x) = \E \nu(z,x) \nu(0,0) = \sigma^2 C_0 ( \frac z \ell, \frac x \ell),\quad R(x) = \int_{\Rm} C(z,x) dz = \sigma^2 \ell R_0(\frac x \ell),
\]
where $C_0(z,x) = \E \nu_0(z,x) \nu_0(0,0)$ and $R_0=\int_{\Rm} C_0(z,x)dz$. 
Asymptotically for $0<\tau\ll 1$, as a central limit approximation,
\[
  \int_0^\Delta \tau^{-\frac12} \nu_0(\frac z\tau,x) dz \approx B_0(\Delta,x),\qquad \E B_0^2(\Delta,x) = R_0(x) \Delta.
\]
This is justified for $\ell\ll L$, where $L$ is the distance of propagation. Very formally, this is equivalent to
\[
 \nu_0(z,x) = \frac{d}{dz} B_0(z,x), \qquad \nu(z,x) = \sigma \ell ^{\frac12} \frac{d}{dz} B_0(z,x).
\]
In this limit, the paraxial model is well-approximated by the following Stratonovich-Schr\"odinger stochastic partial differential equation~\cite{fannjiang2004scalinglimitsbeamwave, garnier2009coupled, bal2025longMMS}:
\[
  2ik du + \Delta_x udz  + k^2 \ell ^{\frac12} \sigma dB_0 (z,\frac x\ell)\circ u=0.
\]
With the standard stochastic correction from Stratonovich to It\^o calculus, this gives in It\^o form
\[
  du = \frac{i}{2k} \Delta_x u dz - \frac{k^2 \ell \sigma^2 R_0(0)}{8} u dz + \frac {ik}2 \ell^{\frac12} \sigma u dB_0(z,\frac x\ell).
\]
Defining $dB(z,x) = \sigma \ell^{\frac12} dB_0(z,\frac x\ell)$ and $R(x) = \sigma^2 \ell R_0(\frac x\ell)$, now with $R$ and $dB$ having units of distance, we have
\begin{equation}\label{eq:ItoSch}
  du = \frac{i}{2k} \Delta_x u dz - \frac{k^2  R(0)}{8} u dz + \frac {ik}2  u dB(z,x).
\end{equation}

The statistical second moment $\mu(z,x,y) = \E u(z,x) u^*(z,y)$ is shown by It\^o stochastic calculus \cite{garnier2016fourth} to satisfy the equation:
\[
  \partial_z \mu = \frac{i}{2k} (\Delta_x-\Delta_y) \mu  + \frac14 k^2  (R(x-y)-R(0)) \mu.
\]
The second moment for incident plane waves, i.e., $u_0=1$ is thus given by the explicit formula
\[
 \mu(z,x,y) =  e^{\frac 14 k^2 z (R(x-y)-R(0))}.
\]

The speckle size may therefore be \ak{modeled} as the value $\fs$ such that, for instance for a correlation decay of $e^{-\frac12}$, we have 
\begin{equation}\label{eq:specklesize}
    \frac 14 k^2 z (R(0) - R(\fs)) = \frac12.
\end{equation}
Assuming $\fs$ is small enough that $R(0)-R(\fs) \approx \frac12|\nabla^2 R(0)| |\fs|^2$, then we find
\begin{align*}
  |\fs|^2 &= \frac{4}{|\nabla^2 R(0)| k^2 z} = \frac{4\ell^2}{\sigma^2 |\nabla^2 R_0(0)|  k^2 \ell z} = \frac{4}{|\nabla^2 R_0(0)|} \frac{Z}z \eta^2 \ell^2, \\
  \eta^2 &:=\frac{1}{\sigma^2 k^2 \ell^2} \frac{\ell}{Z},
\end{align*}
where $Z$ is a characteristic distance of interest such that $\eta$ is smaller than $1$ for the above approximation by the Hessian to be valid.

Using $R_0(x)=e^{-|x|^2}$ for instance, we get
\begin{equation}\label{eq:specklesize2}
    \fs = \eta \ell \sqrt{\frac {2Z}z}
     = \frac{\sqrt{2}\ell}{\sigma k \sqrt {\ell z}}.
\end{equation}

Consider $k=10^7 m^{-1}$, $\ell= 2\, 10^{-3}m$, $\sigma=10^{-6}$, $Z=500m$, we find
\[ \eta = \frac{1}{2\ 10^{-2}} 2\ 10^{-3}=\frac1{10}.\]
For $z=2Z=10^3m$, we find a speckle size $\fs=10^{-1} \ell =2 \, 10^{-4}m$.

\subsection{Numerical Implementation.}\label{supp:numimple}
All simulations are performed on rescaled \ak{dimensionless} spatial variables $(x,y)\in[-32,32]^2$ discretized by $2048\times 2048$ pixels. The rescaled axial variable is modeled by the interval $[0,z_0]$ with axial step size $\Delta z = 1/32$ in the default setting.

The spatial $(x,y,z)$ variables are rescaled as follows. For non-dimensional parameters $\theta\ll 1$ and $k_0=O(1)$, define $\theta=\frac{k_0}{k\ell}$. Let $\sigma_0=\frac{\sigma}{\theta^\frac32}$ and $w_0=\frac{w}{\ell}$. Here $\sigma_0$ is again a standard-deviation strength parameter, so the corresponding variance prefactor is $\sigma_0^2$, consistent with the simulation convention in the main text. We compress the computational domain as $z\to \frac{\ell z}{\theta}, x\to \ell x$.  This gives the It\^o-Schr\"{o}dinger equation in rescaled coordinates as
\begin{equation*}
    \mathrm{d}u=\frac{i}{2k_0}\Delta_xu\mathrm{d}z-\frac{k_0^2\sigma_0^2}{8}R_0(0)u\mathrm{d}z+i\frac{k_0\sigma_0}{2}u\mathrm{d}B_0,
    \end{equation*}
with $u(0,x)=u_0\big(\frac{x}{w_0}\big)$.
Let $z_0$ be the propagation distance for the above \ak{nondimensionalized} equation. When \ak{$k=10^7\,\mathrm{m}^{-1}$ and $\ell=2\times 10^{-3}\,\mathrm{m}$}, this gives $\theta=5k_0\times 10^{-5}$. For $k_0=1/2$, this gives total propagation distance as \ak{$80z_0\,\mathrm{m}$} and $\sigma=1.25 \sigma_0\times 10^{-7}$. 
\ak{In free-space communication through air,} $\sigma\approx 10^{-7}-10^{-6}$ and total distance $z$ ranges from \ak{a few hundred meters} to at most a couple of kilometers. Varying $\sigma_0$ between 1 and 4 while varying $z_0$ between 4 and 8 gives rise to this setting, in which case, $\eta$ as defined in the previous section for plane wave incident beams varies from $0.1-1$.

For beams with width $w_0$, the speckle size $\fs$ we expect to observe numerically, using the same method as above in ~\eqref{eq:specklesize}, is
\begin{equation*}
    \frac{k_0^2\sigma_0^2z_0|\fs|^2|\nabla^2R_0(0)|}{32}=\frac{1}{2},\quad \mbox{ i.e., }\quad
    |\fs|=\frac{4}{\sqrt{k_0^2\sigma_0^2z_0|\nabla^2R_0(0)|}}\,.
\end{equation*}
The pixel size $\Delta x$ should ideally be small enough \ak{to} resolve this. This is the \ak{microscale behavior}. 

There is also diffusion at the macroscopic scale, which broadens the beam \cite{bal2024complex}. If the initial condition is Gaussian for instance, $u_0(x)=e^{-\frac{|x|^2}{2w_0^2}}$, solving a diffusion model \cite{bal2024complex} to compute the mean intensity gives
\begin{align*}
    \mathbb{E}I(z_0,r)=&\Big|\mathbb{I}_d-\frac{\sigma_0^2z_0^3\nabla^2R_0(0)}{12w_0^2}\Big|^{1/2}
    \\ &\times
    \exp\Big(-\frac{1}{2w_0^2}r^\top\big[\mathbb{I}_d-\frac{\sigma_0^2z_0^3\nabla^2R_0(0)}{12w_0^2}\Big]r\Big)\,.
\end{align*}
So we may take the beam broadening rate ($\alpha=w(z_0)/w_0$) to be
\begin{equation*}
    \alpha=\sqrt{1+\frac{\sigma_0^2|\nabla^2R_0(0)|z_0^3}{12w_0^2d}}\,.
\end{equation*}
In a uniform medium, the spreading rate is proportional to $\frac{z_0}{k_0w_0^2}$, which becomes more dominant when the above is small. So, the simulation domain should be proportional to $w_0\alpha$. When the correlation function $R_0(x)=\frac{1}{4\pi}e^{-\pi^2|x|^2}$, we have $|\nabla^2R_0(0)|=\pi$ for $d=2$. This gives the crude estimate $\Delta x\ll |\fs|\approx\frac{5}{\sqrt{\sigma_0^2z_0}}$ and box length $2L$ with $L\gg \alpha w_0\approx \sqrt{w_0^2+0.1\sigma_0^2z_0^3}$\,.

For convenience, Table~\ref{tab:simulation_parameters_isml} summarizes the default numerical parameters used for the simulations reported in the main text.

\begin{table}[htbp!]
\centering
\small
\setlength{\tabcolsep}{6pt}
\caption{Simulation parameters used in the numerical experiments, unless explicitly stated otherwise.}
\label{tab:simulation_parameters_isml}
\begin{tabular}{ll}
\toprule
\textbf{Quantity} & \textbf{Value} \\
\midrule
Transverse computational domain & $[-32,32]^2$ \\
Transverse grid size & $2048\times 2048$ \\
Transverse grid spacing & $\Delta x = 64/2048 = 1/32$ \\
Axial step size & $\Delta z = 1/32$ \\
Default propagation distance & $z_0=5$ \\
Lateral correlation $R_0(x)$ & $\frac{1}{4\pi}\exp(-\pi^2|x|^2/l_0^2)$\\
Default turbulence strength & \ak{$\sigma_0 = 1.1$} \\
Default correlation-length parameter & $l_0 = 1.5$ \\
Beam waist at the source & $w_0 = 4$ \\
Source alphabet & $(p,l)\in\{(0,1),(1,4),(0,-6),(1,8)\}$ \\
Number of beam classes & $15$ \\
\bottomrule
\end{tabular}
\end{table}

\paragraph{Speckle-resolution Diagnostics via Autocorrelation.}

We ensure that individual speckle grains are reasonably resolved numerically. To quantify this, we first define
\[
I_0(x):=I(x)-\langle I\rangle,
\]
where $\langle I\rangle$ denotes the empirical average of $I$ over the discrete $N\times N$ pixel grid.
The normalized spatial autocorrelation of $I_0$ is then
\begin{equation}\label{eq:acf_spatial_isml}
    p(\delta):=\frac{\sum_x I_0(x)\,I_0(x+\delta)}{\sum_x I_0(x)^2},
\end{equation}
where $\delta\in\mathbb{Z}^2$ denotes a discrete two-dimensional pixel shift, the sum is taken over all grid points, and the shifted index $x+\delta$ is understood periodically on the discrete grid.
With this normalization, $p(0)=1$.
Numerically, ~\eqref{eq:acf_spatial_isml} is evaluated efficiently via
\begin{equation}\label{eq:acf_periodic_isml}
    p(\delta)=\frac{\mathcal{F}^{-1}\!\left(\left|\mathcal{F}(I_0)\right|^2\right)(\delta)}{\mathcal{F}^{-1}\!\left(\left|\mathcal{F}(I_0)\right|^2\right)(0)},
\end{equation}
where $\mathcal{F}$ denotes the discrete Fourier transform on the periodic grid.
To reduce this two-dimensional function to a one-dimensional profile, we define $\bar p(r)$ by averaging $p(\delta)$ over all shifts $\delta$ with the same Euclidean radius $r:=|\delta|$, where $r$ is measured in pixels.
\begin{equation}\label{eq:acf_radial_isml}
    \bar p(r):=\operatorname{avg}_{|\delta|=r} p(\delta).
\end{equation}
We then define the correlation length by
\begin{equation}\label{eq:corrlen_isml}
    l_c:=\min\{r\ge 1:\bar p(r)<e^{-1}\},
\end{equation}
with the convention $l_c=r_{\max}$ if $\bar p(r)\ge e^{-1}$ for all $r\le r_{\max}$.
For the current default setting \ak{$(z,\sigma_0,l_0)=(5,1.1,1.5)$}, computed from $n=150$ realizations of the OAM code $1101$ on the $2048\times2048$ grid, we obtain
\[
    \ak{l_c = 67.61 \pm 8.85 \quad \text{pixels}},
\]
with observed values ranging from \ak{$47$ to $87$ pixels}. The propagated speckle pattern is thus well resolved numerically.
The corresponding radial ACF profile, histogram, and summary statistics are reported in \ak{Fig.~\ref{fig:acf_validation_isml}} and Table~\ref{tab:acf_validation_isml}.

\subsection{Numerical Validation.}
\label{supp:numvalid}

To ensure that the simulated intensity patterns are numerically well-resolved, we perform a frequency-domain validity check based on the spatial sampling theorem.
Recall that $\Delta x$ denotes the transverse grid spacing of the $N\times N$ simulation grid introduced above.
The corresponding Nyquist frequency is: $f_{\text{Nyq}} = \frac{1}{2\Delta x}$, 
which defines the highest spatial frequency that can be represented without aliasing on the discrete grid.
A necessary condition for numerical fidelity is that spatial frequency content of the simulated fields decays sufficiently before reaching $f_{\text{Nyq}}$.

For the propagated intensity
$I(x)=|u(z,x)|^2$,
we define its discrete spatial power spectral density (PSD) by
\begin{equation}\label{eq:intensity_psd_isml}
    P_I(k)=\big|\mathcal{F}\{I\}(k)\big|^2.
\end{equation}
To assess numerical resolution, we examine the radially averaged profile of ~\eqref{eq:intensity_psd_isml}. For each radius $r\ge 0$, let
\begin{equation}\label{eq:radial_psd_isml}
    \overline{P}_I(r):=\frac{1}{|\mathcal K_r|}\sum_{k\in \mathcal K_r} P_I(k),
\end{equation}
where $\mathcal K_r$ denotes the set of discrete frequency samples whose Euclidean radius lies in the radial bin centered at $r$.
We then plot the normalized radial PSD
\begin{equation}\label{eq:normalized_radial_psd_isml}
    \widetilde{P}_I(r):=\frac{\overline{P}_I(r)}{\max_{r'}\overline{P}_I(r')}.
\end{equation}
This image PSD $P_I$ should be distinguished from the medium spectrum $\Phi$ given by
\begin{equation}\label{eqn:PSD}
    \Phi({\bf k}) = \sigma_0^2 l_0^2 \exp\!\left(-\frac{l_0^2\vert {\bf k}\vert^2}{4\pi^2}\right),
\end{equation}
which prescribes the random potential $\nu$.
As shown in \ak{Fig.~\ref{fig:psd_check}}, the normalized radial PSD for a representative OAM class is concentrated well below the Nyquist limit, with negligible mass near $f_{\text{Nyq}}$, which indicates that the intensity field is adequately resolved on the computational grid.
\begin{figure}[htbp!]
    \centering
    \includegraphics[width=0.6\linewidth]{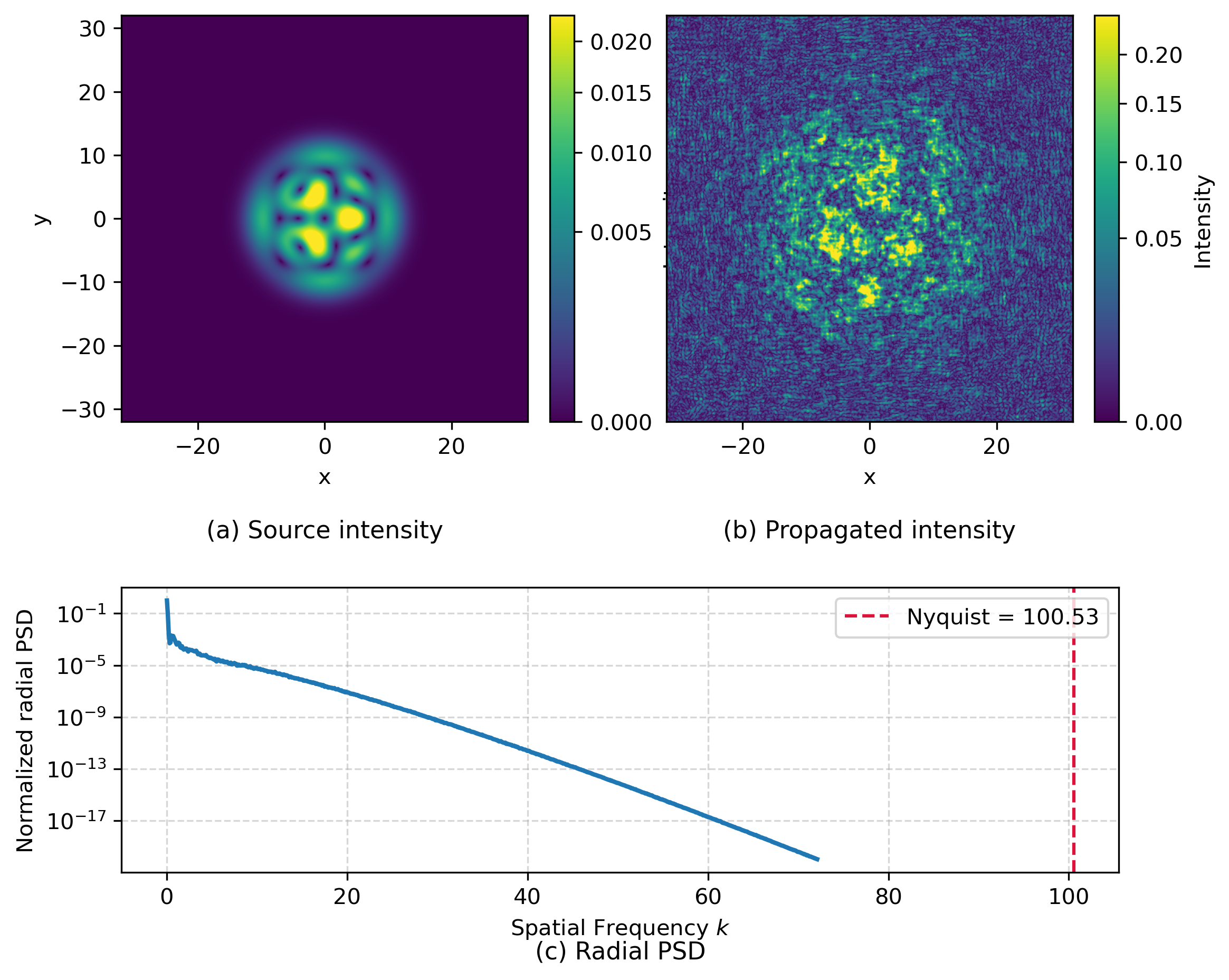}
    \caption{Frequency-domain resolution check for the OAM code $1101$ \ak{under the default setting $(z,\sigma_0,l_0)=(5,1.1,1.5)$}. \ak{Panel (a) shows the source intensity associated with the code $1101$; panel (b) shows a representative propagated intensity $|u(z,\cdot)|^2$ generated with the current medium spectrum~\eqref{eqn:PSD}; and panel (c) shows the normalized radial profile $\widetilde P_I(r)$ of the image PSD~\eqref{eq:intensity_psd_isml}, with the Nyquist frequency marked by the dashed vertical line.} The spectral mass is concentrated well below the Nyquist limit.}
    \label{fig:psd_check}
\end{figure}

We complement this frequency-domain check with an autocorrelation-based diagnostic using the correlation-length definition \eqref{eq:corrlen_isml}. For the current default setting \ak{$(z,\sigma_0,l_0)=(5,1.1,1.5)$}, we compute the radial ACF profile and the corresponding correlation length for all $n=150$ realizations of the OAM code $1101$. \ak{Fig.~\ref{fig:acf_validation_isml}} shows the empirical mean and standard deviation of the radial ACF together with the histogram of the resulting correlation lengths, and Table~\ref{tab:acf_validation_isml} reports the corresponding summary statistics. In this setting, the estimated correlation length is \ak{$l_c=67.61\pm 8.85$ pixels}, which confirms that the propagated intensity remains well resolved on the $2048\times 2048$ grid.
\begin{figure}[htbp!]
    \centering
    \begin{tabular}{cc}
    \includegraphics[width=0.35\linewidth]{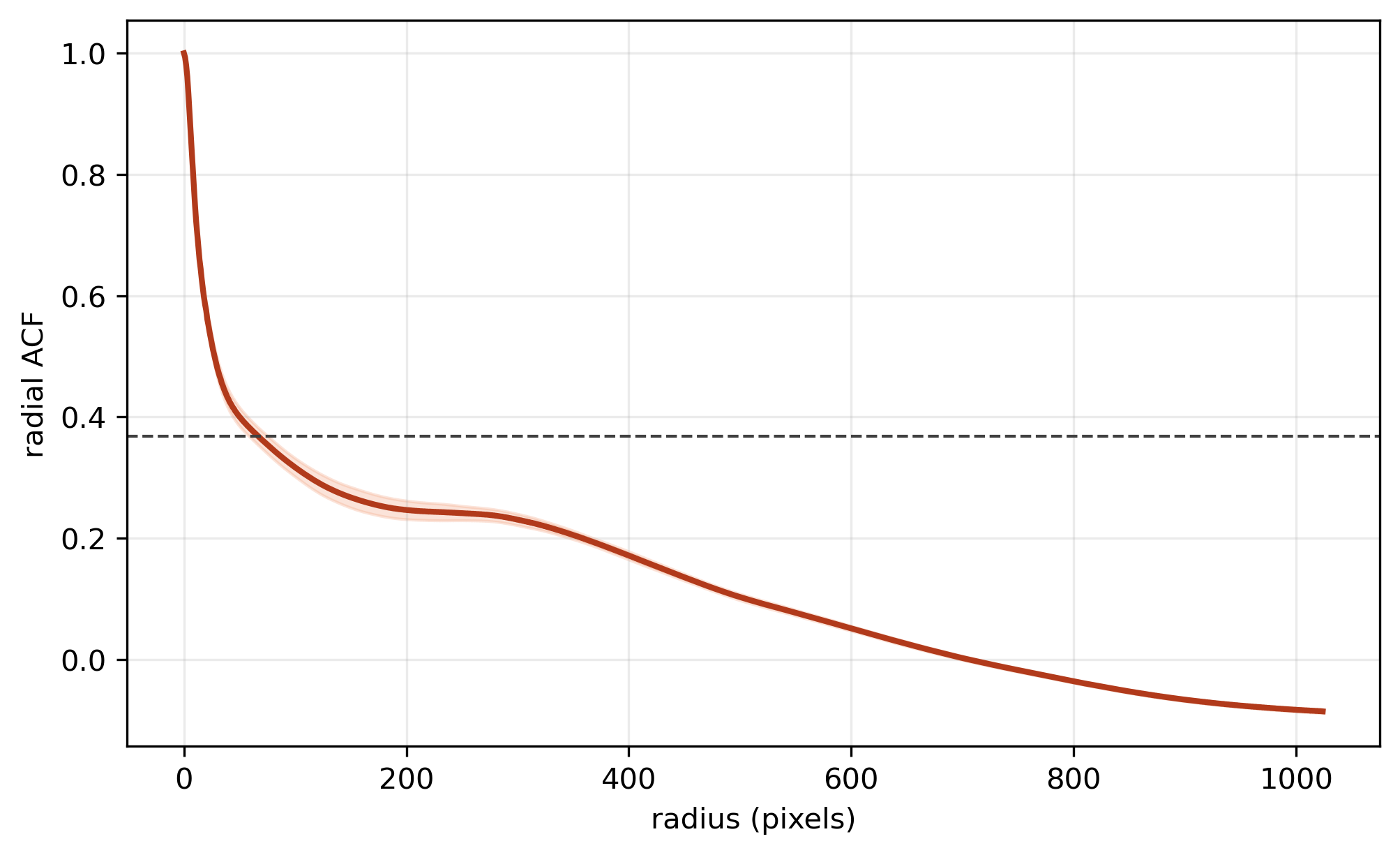}
    &
    \includegraphics[width=0.35\linewidth]{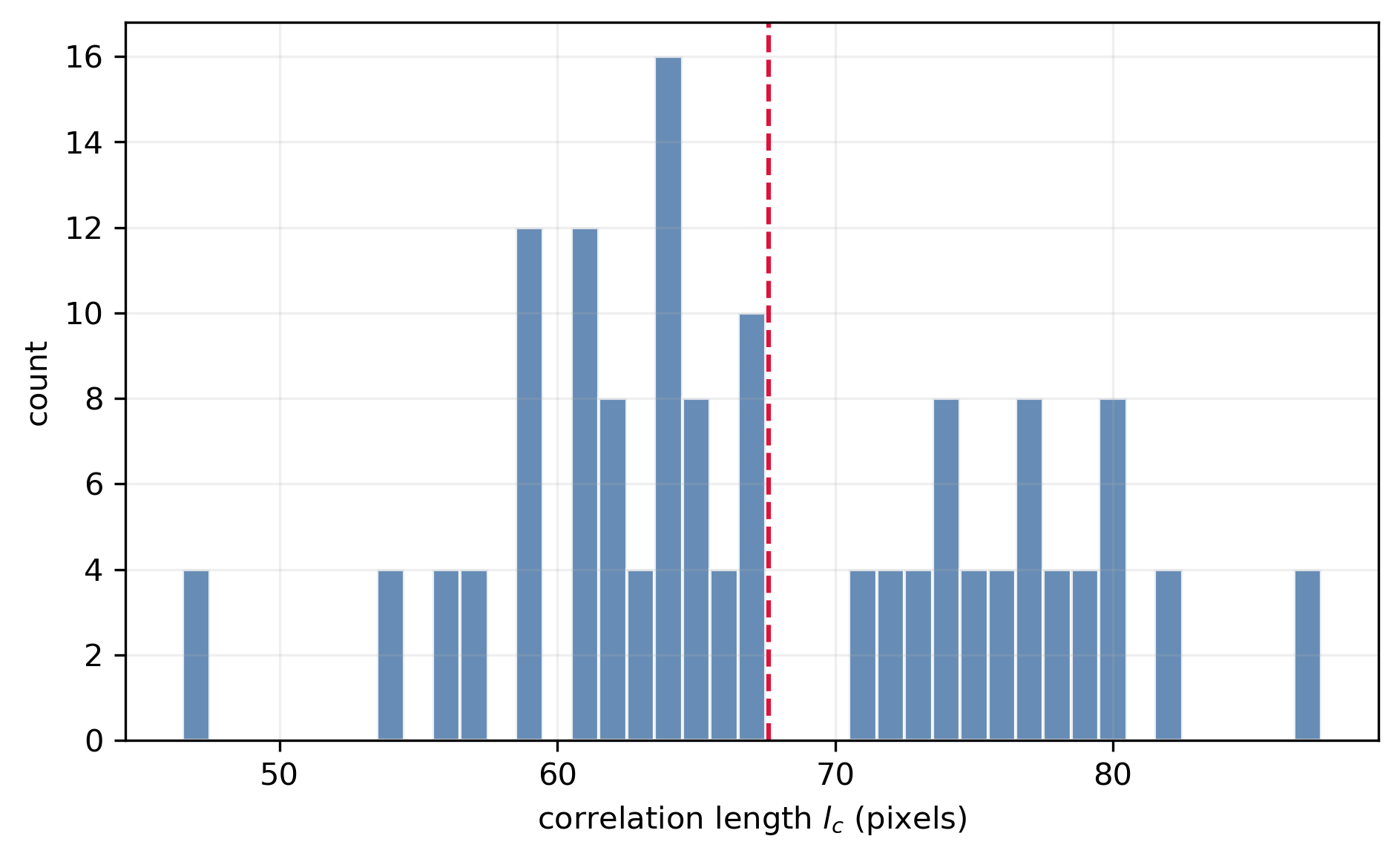}
    \\
    \small\ak{\textbf{(a)} Radial ACF profile}
    &
    \small\ak{\textbf{(b)} Correlation-length histogram}
    \end{tabular}
    \caption{Autocorrelation-based resolution diagnostics for the OAM code $1101$ \ak{under the default setting $(z,\sigma_0,l_0)=(5,1.1,1.5)$}. \ak{Panel (a) shows the empirical mean and standard deviation of the radial ACF over $150$ realizations. Panel (b) shows the histogram of the correlation lengths defined by~\eqref{eq:corrlen_isml}.}}
    \label{fig:acf_validation_isml}
\end{figure}

\begin{table}[t]
\centering
\small
\setlength{\tabcolsep}{4pt}
\caption{Summary of the ACF-based correlation-length estimate for the OAM code $1101$. The correlation length $l_c$ is defined by ~\eqref{eq:corrlen_isml} and reported in pixels.}
\label{tab:acf_validation_isml}
\begin{tabular}{lccccc}
\toprule
\textbf{Code} & \textbf{Label} & \textbf{$n$} & \textbf{mean} & \textbf{std} & \textbf{range} \\
\midrule
1101 & 13 & 150 & \ak{67.61} & \ak{8.85} & \ak{$[47,87]$} \\
\bottomrule
\end{tabular}
\end{table}

\paragraph{Plane-wave scintillation diagnostics.}
To quantify the strength of intensity fluctuations induced by the random medium, we compute the scintillation index function,
\begin{equation}\label{eq:scintillation_isml}
    S(x):=\frac{\mathbb{E}[I^2(x)]-(\mathbb{E}[I(x)])^2}{(\mathbb{E}[I(x)])^2},
\end{equation}
where $\mathbb{E}$ denotes expectation over turbulence realizations and $I(x)=|u(z,x)|^2$ is the received intensity.
In order to interpret ~\eqref{eq:scintillation_isml} away from pixels where $\mathbb{E}[I(x)]$ is negligibly small, we define the region of interest (ROI)
\begin{equation}\label{eq:scintillation_roi_isml}
    \Omega_\tau:=\Big\{x:\mathbb{E}[I(x)]\ge \tau \max_{x'}\mathbb{E}[I(x')]\Big\},\qquad \tau=0.10,
\end{equation}
and report scintillation statistics restricted to $\Omega_\tau$.
For turbulence-strength diagnostics, we use a plane-wave initial condition rather than a structured beam, since in the latter case the scintillation index can be artificially amplified on regions where the mean intensity is very small. \ak{Fig.~\ref{fig:scintillation_codes_isml}} shows a representative propagated plane-wave intensity, the corresponding empirical mean intensity, and the ROI-restricted scintillation map for the current default setting \ak{$(z,\sigma_0,l_0)=(5,1.1,1.5)$}, computed from $20$ realizations. Table~\ref{tab:scintillation_codes_isml} reports the corresponding ROI statistics. In this setting the mean intensity is essentially uniform, so the ROI occupies the full field; the resulting plane-wave scintillation level is therefore a stable indicator of the turbulence strength. Quantitatively, the ROI scintillation is of order one but slightly below the fully developed benchmark, with \ak{mean $0.8378$ and median $0.7770$}.

\begin{figure}[htbp!]
    \centering
    \includegraphics[width=\linewidth]{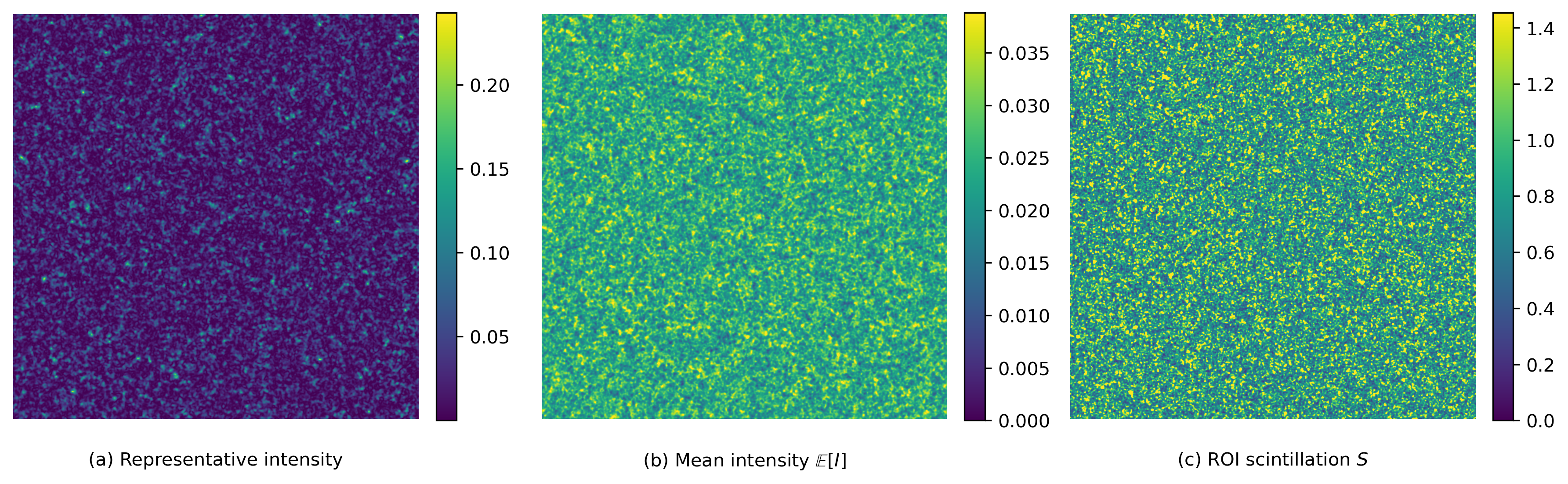}
    \caption{Plane-wave scintillation diagnostics, computed from $20$ realizations. \ak{Panels (a)--(c) show a representative propagated intensity, the empirical mean intensity $\mathbb{E}[I(x)]$, and the scintillation map restricted to the ROI $\Omega_{0.1}$, respectively.}}
    \label{fig:scintillation_codes_isml}
\end{figure}

\begin{table}[t]
\centering
\small
\setlength{\tabcolsep}{4pt}
\caption{Plane-wave scintillation statistics on the ROI $\Omega_\tau$ in ~\eqref{eq:scintillation_roi_isml}. Here $|\Omega_\tau|/|\Omega|$ is the fraction of pixels in the ROI; $\overline{S}_{\Omega_\tau}:=|\Omega_\tau|^{-1}\sum_{x\in\Omega_\tau}S(x)$ denotes the ROI-averaged scintillation index; and the remaining columns report quantiles of $S(x)$ over $x\in\Omega_\tau$.}
\label{tab:scintillation_codes_isml}
\begin{tabular}{lcccccc}
\toprule
\textbf{Input} & \textbf{$n$} & $|\Omega_\tau|/|\Omega|$ & $\overline{S}_{\Omega_\tau}$ & $\mathrm{median}$ & $q_{0.25}$ & $q_{0.75}$ \\
\midrule
plane wave & 20 & $1.0000$ & \ak{$0.8378$} & \ak{$0.7770$} & \ak{$0.6091$} & \ak{$0.9949$} \\
\bottomrule
\end{tabular}
\end{table}

\subsection{Supplementary Classification Comparisons.}\label{app:classification_comparisons}

\paragraph{\ak{Class-wise Errors for the Classifier.}}
\ak{To examine whether the ResNet-18 accuracy is dominated by a small subset of classes, Fig.~\ref{fig:resnet18_intensity_confusion_isml} reports the row-normalized confusion matrix for the recommended intensity-input classifier.
The matrix is computed on the held-out real test set by pooling predictions from three independently trained models under the same protocol as the main accuracy table.
The pooled matrix contains $2025$ test predictions and has overall accuracy $98.32\%$.}

\begin{figure}[htbp!]
    \centering
    \includegraphics[width=0.82\linewidth]{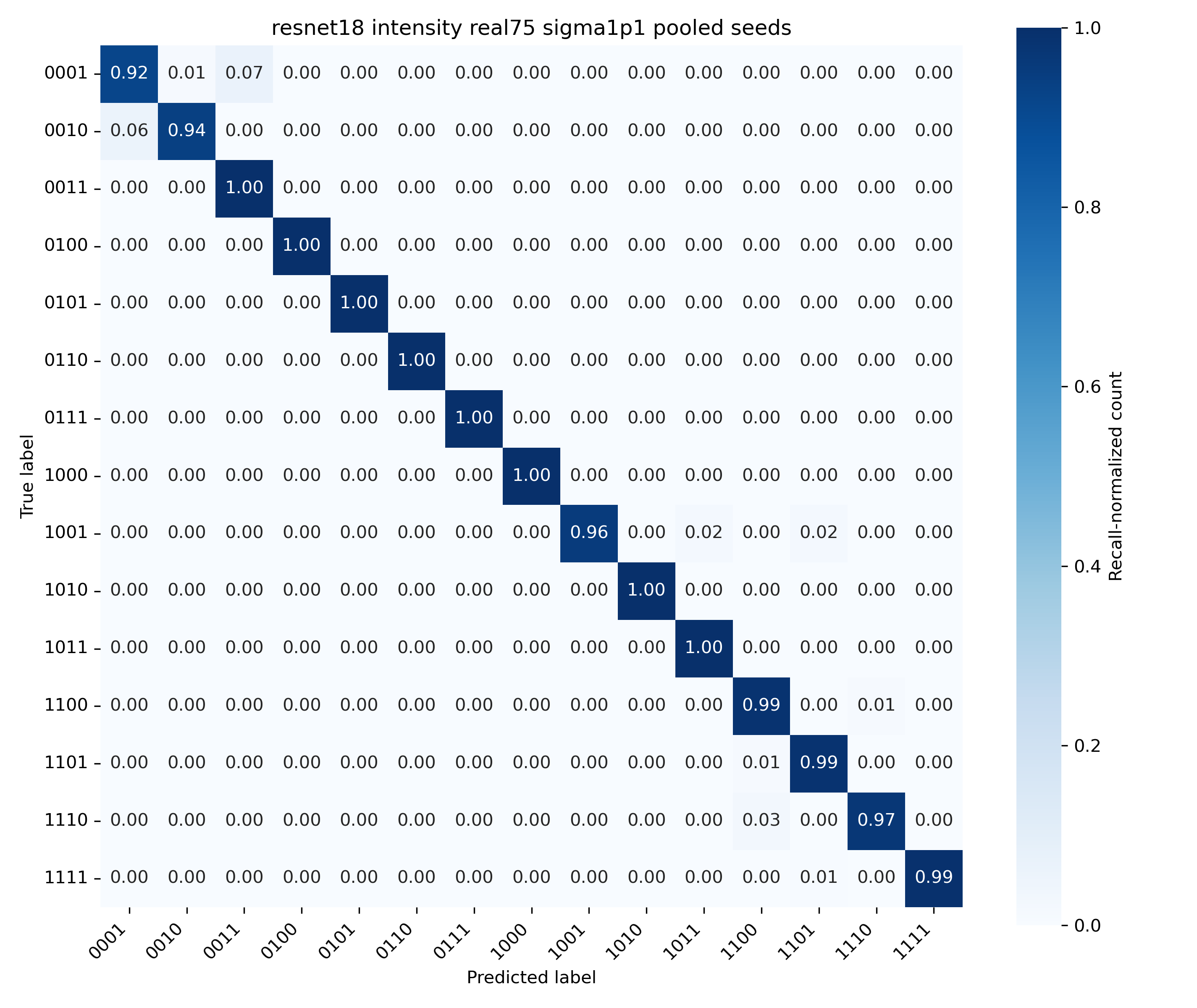}
    \caption{\ak{Row-normalized confusion matrix for the recommended ResNet-18 classifier with cropped intensity input under the default setting $(z,\sigma_0,l_0)=(5,1.1,1.5)$ and $N_{\mathrm{train}}=75$ real training samples per class. The matrix is computed on the held-out real test split by pooling predictions over the three independently trained models with seeds $\{42,100,2023\}$; the corresponding pooled overall accuracy is $98.32\%$.}}
    \label{fig:resnet18_intensity_confusion_isml}
\end{figure}

\paragraph{\ak{Alphabet-size Scaling.}}
\ak{We also tested how the classifier output layer scales when the discrete source alphabet is changed.
For a $C$-class alphabet, the convolutional feature extractor is unchanged; only the final linear layer is replaced by a layer with $C$ logits, and the one-hot label vector is changed from an element of $\{0,1\}^{15}$ to an element of $\{0,1\}^{C}$.
To verify this experimentally, we trained SimpleCNN and ResNet-18 on the first $C$ classes of the source alphabet for $C\in\{5,10,15\}$.
All runs used the same propagated dataset setting: $50$ real training samples per class, the centered $64\times64$ intensity crop, zero padding, and the random seeds $\{42,100,2023\}$.
The model was trained from scratch for the fixed $500$-epoch budget for every architecture, alphabet size, and seed.
No weights were reused from the $15$-class classifier.
The held-out test accuracies in Table~\ref{tab:alphabet_scaling_appendix} show that changing the alphabet size only requires changing the output layer and label dimension, although the classification accuracy may vary with the number and separability of the classes.}

\begin{table}[htbp!]
\centering
\small
\caption{\ak{Alphabet-size scaling study for the classifier architectures. Entries are top-1 test accuracies (\%) reported as mean $\pm$ standard deviation over the three independently trained seeds $\{42,100,2023\}$.}}
\label{tab:alphabet_scaling_appendix}
\begin{tabular}{lccc}
\toprule
\ak{\textbf{Model}} & \ak{\textbf{$C=5$}} & \ak{\textbf{$C=10$}} & \ak{\textbf{$C=15$}} \\
\midrule
\ak{SimpleCNN} & \ak{$97.04 \pm 0.93$} & \ak{$91.41 \pm 2.69$} & \ak{$91.60 \pm 0.31$} \\
\ak{ResNet-18} & \ak{$95.56 \pm 1.60$} & \ak{$93.70 \pm 0.46$} & \ak{$93.38 \pm 1.85$} \\
\bottomrule
\end{tabular}
\end{table}

\ak{The parameter-count change is confined to the final classifier layer.
In this experiment, SimpleCNN changes from $93{,}765$ trainable parameters for $C=5$ to $95{,}055$ for $C=15$, while ResNet-18 changes from $11{,}172{,}805$ to $11{,}177{,}935$ trainable parameters.
Thus almost all trainable parameters are shared across alphabet sizes.}

\paragraph{Random Shifts.}
This paragraph supplements Section 3.E in \ak{the} main text by comparing deterministic fixed-shift training with random-shift training at the same shift magnitude $S$, always evaluated under the central-window protocol.
Under the present protocol, random-shift training is markedly more robust than deterministic fixed-shift training once $S>0$.
Table~\ref{tab:shift_protocol_appendix} reports the corresponding numerical results.
\begin{table}[htbp!]
\centering
\small
\setlength{\tabcolsep}{4pt}
\caption{Fixed-shift versus random-shift training under the default setting, with centered testing in all cases.}
\label{tab:shift_protocol_appendix}
\begin{tabular}{lccc}
\toprule
\textbf{Protocol} & \textbf{$S$} & \textbf{SimpleCNN} & \ak{\textbf{ResNet-18}} \\
\midrule
fixed  & 16 & \ak{$51.65\pm1.33$} & \ak{$8.84\pm4.05$} \\
random & 16 & \ak{$79.51\pm3.77$} & \ak{$93.73\pm1.13$} \\
fixed  & 32 & \ak{$25.23\pm3.41$} & \ak{$15.51\pm7.89$} \\
random & 32 & \ak{$64.05\pm5.69$} & \ak{$91.41\pm2.57$} \\
fixed  & 48 & \ak{$9.14\pm1.19$} & \ak{$0.44\pm0.15$} \\
random & 48 & \ak{$53.98\pm4.46$} & \ak{$88.15\pm3.60$} \\
\bottomrule
\end{tabular}
\end{table}

\begin{figure*}[t]
    \centering
    \includegraphics[width=\textwidth]{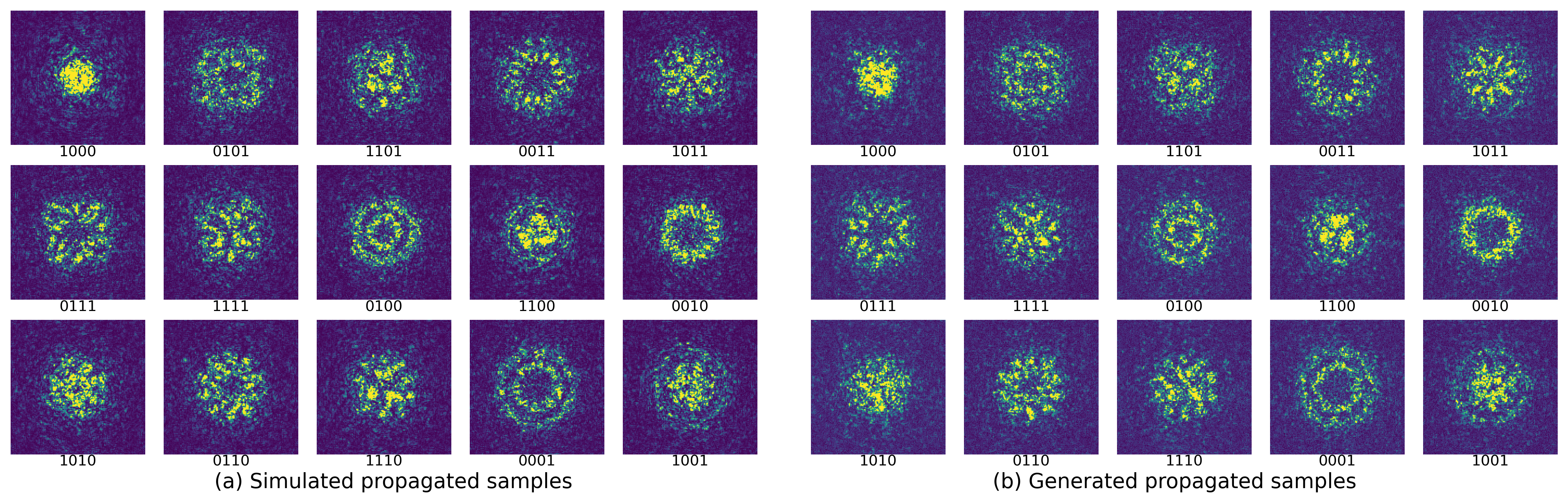}
    \caption{\ak{Dataset-overview-style comparison} of simulated and generated propagated patterns \ak{under the default setting $(z,\sigma_0,l_0)=(5,1.1,1.5)$ and the $\mathbf{v}/\mathbf{v}$ configuration with $\lambda=1$}. \ak{Panel (a) shows one representative simulated propagated sample for each of the $15$ classes, after the same $2048\to256$ average-pooling step used in the learning pipeline. Panel (b) shows one representative generated sample for each corresponding class.} \ak{Both panels} are displayed with the common intensity range $[0,0.15]$. In particular, the code $1101$ appears in the third position of the first row in \ak{both panels}.}
    \label{fig:gen_vs_real_appendix}
\end{figure*}

\paragraph{Convolutional Padding Strategies.} 
Although the SSFM simulation uses a periodic Fourier representation at the full-field level, the classifier operates on localized crops.
Consequently, circular padding at crop boundaries may create artificial wrap-around correlations by identifying opposite crop edges.
Unless explicitly stated otherwise, all classification experiments in this paper use zero padding.
We compare this default choice with circular padding in a separate controlled comparison.
Here ``padding'' refers to the boundary condition used by each convolutional layer on the \emph{cropped} input (and intermediate feature maps), rather than any padding applied to the physical intensity measurements. Table~\ref{tab:padding_comparison_isml} shows only a minor dependence on the convolutional boundary condition at the crop scale.

\begin{table}[htpb!]
\centering
\small
\setlength{\tabcolsep}{4pt}
\caption{Controlled comparison of padding strategies for the dataset generated with \ak{$(z,\sigma_0,l_0)=(5,1.1,1.5)$}, in the no-shift setting $S=0$, where training, validation, and testing all use the centered crop $\delta=(0,0)$ on a $64\times64$ crop ($N_{\mathrm{train}}=50$).}
\label{tab:padding_comparison_isml}
\begin{tabular}{lcc}
\toprule
\textbf{Model} & \textbf{Padding $p$} & \textbf{Accuracy (\%)} \\
\midrule
\multirow{2}{*}{SimpleCNN}
 & zeros    & \ak{$92.59\pm1.27$} \\
 & circular & \ak{$90.12\pm0.99$} \\
\midrule
\multirow{2}{*}{\ak{ResNet-18}}
 & zeros    & \ak{$93.48\pm1.18$} \\
 & circular & \ak{$95.11\pm0.30$} \\
\bottomrule
\end{tabular}
\end{table}

\subsection{Supplementary Generative-model Results.}\label{supp:gen_aug}

\paragraph{Visual Comparisons.}
\ak{Fig.~\ref{fig:ablation_grid}} displays representative generated outputs for the diffusion-model configurations considered in the main text, under a common display range, so that the qualitative effect of changing the prediction/loss parameterization can be compared directly.
\ak{Fig.~\ref{fig:gen_vs_real_appendix} then provides a dataset-overview-style comparison} of the learned codebook: \ak{the simulated propagated samples are shown alongside the corresponding generated outputs under the $\mathbf{v}/\mathbf{v}$ configuration with $\lambda=1$ for the updated default setting $(z,\sigma_0,l_0)=(5,1.1,1.5)$}.

\begin{figure}[t!]
    \centering
    \includegraphics[width=0.99\linewidth]{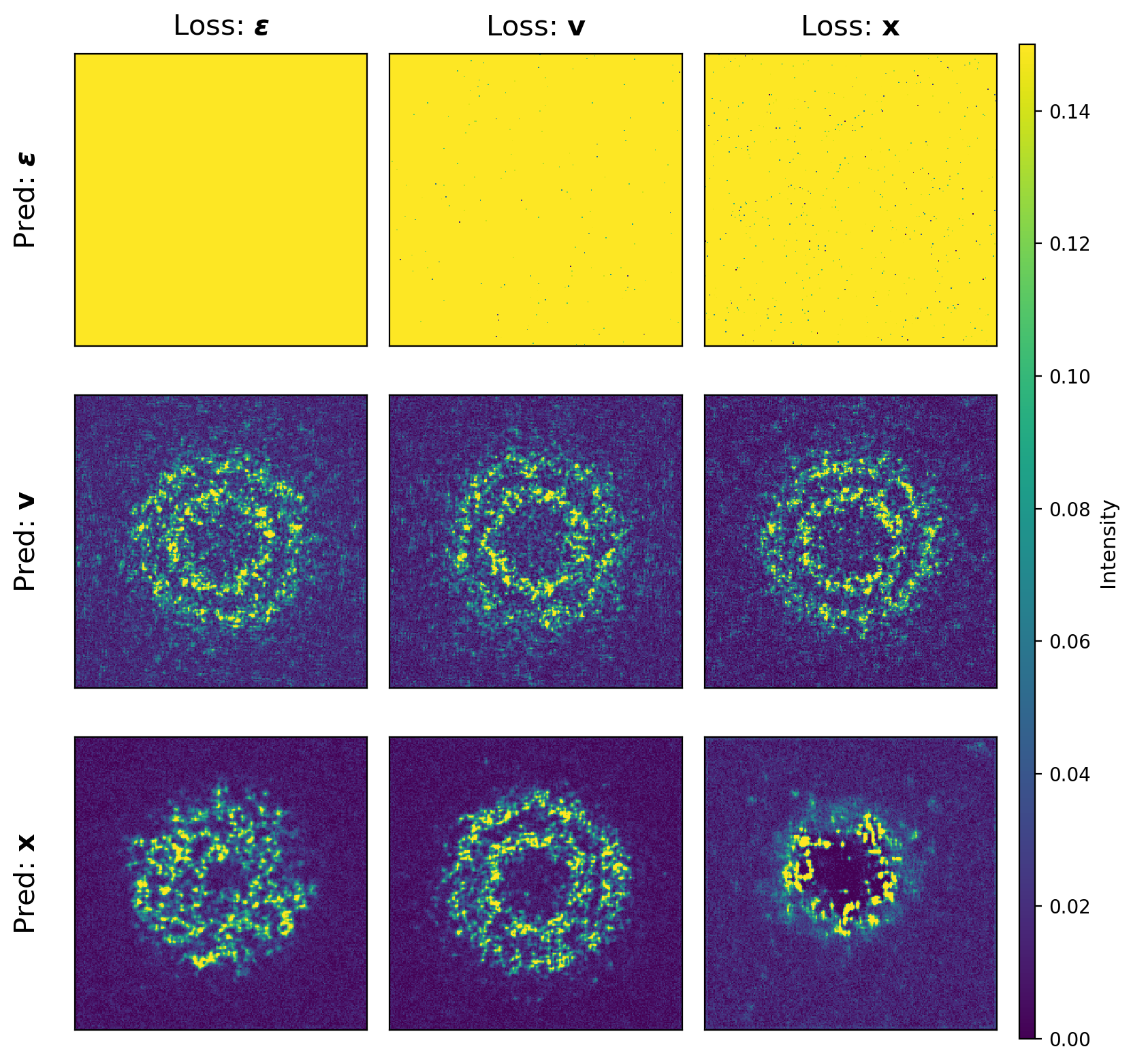}
    \caption{Representative generated samples across the controlled diffusion-model configurations from Section 4.E in \ak{the} main text. \ak{The $3\times3$ grid compares diffusion prediction targets by row and loss targets by column.} All \ak{images} are displayed with the common intensity range $[0,0.15]$.}
    \label{fig:ablation_grid}
\end{figure}

\end{document}